\newcommand{\rmd}{{\mathrm d}}
\newcommand{\rme}{{\mathrm e}}
\newcommand{\rmi}{{\mathrm i}}
\newcommand{\Ord}{{\mathrm O}}
\newcommand{\littleo}{{\mathrm o}}
\DeclareSymbolFont{lettersA}{U}{pxmia}{m}{it}
\DeclareMathSymbol{\alphaup}{\mathord}{lettersA}{"0B}
\DeclareMathSymbol{\betaup}{\mathord}{lettersA}{"0C}
\DeclareMathSymbol{\gammaup}{\mathord}{lettersA}{"0D}
\DeclareMathSymbol{\deltaup}{\mathord}{lettersA}{"0E}
\DeclareMathSymbol{\epsilonup}{\mathord}{lettersA}{"22}
\DeclareMathSymbol{\zetaup}{\mathord}{lettersA}{"10}
\DeclareMathSymbol{\etaup}{\mathord}{lettersA}{"11}
\DeclareMathSymbol{\thetaup}{\mathord}{lettersA}{"12}
\DeclareMathSymbol{\iotaup}{\mathord}{lettersA}{"13}
\DeclareMathSymbol{\kappaup}{\mathord}{lettersA}{"14}
\DeclareMathSymbol{\lambdaup}{\mathord}{lettersA}{"15}
\DeclareMathSymbol{\muup}{\mathord}{lettersA}{"16}
\DeclareMathSymbol{\nuup}{\mathord}{lettersA}{"17}
\DeclareMathSymbol{\xiup}{\mathord}{lettersA}{"18}
\DeclareMathSymbol{\piup}{\mathord}{lettersA}{"19}
\DeclareMathSymbol{\rhoup}{\mathord}{lettersA}{"1A}
\DeclareMathSymbol{\sigmaup}{\mathord}{lettersA}{"1B}
\DeclareMathSymbol{\tauup}{\mathord}{lettersA}{"1C}
\DeclareMathSymbol{\upsilonup}{\mathord}{lettersA}{"1D}
\DeclareMathSymbol{\phiup}{\mathord}{lettersA}{"1E}
\DeclareMathSymbol{\chiup}{\mathord}{lettersA}{"1F}
\DeclareMathSymbol{\psiup}{\mathord}{lettersA}{"20}
\DeclareMathSymbol{\omegaup}{\mathord}{lettersA}{"21}
\renewcommand{\Psi}{\varPsi}
\renewcommand{\Lambda}{\varLambda}
\renewcommand{\Sigma}{\varSigma}
\renewcommand{\Gamma}{\varGamma}
\renewcommand{\Theta}{\varTheta}
\renewcommand{\Xi}{\varXi}
\renewcommand{\Pi}{\varPi}
\renewcommand{\Upsilon}{\varUpsilon}
\renewcommand{\Phi}{\varPhi}
\renewcommand{\Omega}{\varOmega}
\newcommand{\R}{{\mathbb R}}
\newcommand{\N}{{\mathbb N}}
\newcommand{\Z}{{\mathbb Z}}
\newcommand{\C}{{\mathbb C}}
\newcommand{\Prob}{{\mathbb P}}
\newcommand{\E}{{\mathbb E}}
\newcommand{\coloneq}{\mathbin{\hbox{\raise0.08ex\hbox{\rm :}}\!\!=}}
\newcommand{\eqcolon}{\mathbin{=\!\!\hbox{\raise0.08ex\hbox{\rm :}}}}
\renewcommand{\leq}{\leqslant}
\renewcommand{\geq}{\geqslant}
\renewcommand{\epsilon}{\varepsilon} % I prefer a more curly epsilon
\newcommand{\dimostrazione}{\noindent{\sl Proof.}\phantom{X}}
\newcommand{\finire}{\hfill$\Box$}
\newcommand \printdate[3]{%
    \def \@suffix##1{%
        \def \@n{##1}%
        \ifnum \@n = 1 st\else%
        \ifnum \@n = 2 nd\else%
        \ifnum \@n = 3 rd\else%
        \ifnum \@n = 21 st\else%
        \ifnum \@n = 22 nd\else%
        \ifnum \@n = 23 rd\else%
        \ifnum \@n = 31 st\else%
        th\fi \fi \fi \fi \fi \fi \fi%
    }%
    \relax%
    \number #1\raise0.7ex\hbox{\footnotesize \@suffix{#1}}\kern0.25em%
    \ifcase #2\or%
        January\or February\or March\or%
        April\or May\or June\or%
        July\or August\or September\or%
        October\or November\or December%
    \fi\ %
    \number #3%
}
\newtheorem{theorem}{Theorem}[section]
\newtheorem{proposition}[theorem]{Proposition}
\newtheorem{definition}[theorem]{Definition}
\newtheorem{corollary}[theorem]{Corollary}
\newtheorem{lemma}[theorem]{Lemma}
\newtheorem{remark}[theorem]{Remark}
\newtheorem{assumption}[theorem]{Assumption}
\newcommand{\Ea}{E_{\rm a}}
\newcommand{\Ec}{E_{\rm c}}
\newcommand{\Eb}{E_{\rm b}}
\newcommand{\Ed}{E_{\rm d}}
\newcommand{\curlyD}{{\mathcal D}}
\newcommand{\dom}{\mathop{\rm Dom}}
\newcommand{\asterisk}{*} % Find the right thing to go here.
\newcommand{\curlyP}{{\mathcal P}}
\newcommand{\curlyM}{{\mathcal M}}
\newcommand{\qqev}{\mu} 
\newcommand{\qqef}{\psi_{\sigma,I}}
\newcommand{\hatqqef}{\hat{\psi}_{\sigma,I}}
\newcommand{\qqqef}[1]{\psi_{\sigma,I,#1}}
\begin{document}
\title{Localised eigenfunctions in \v{S}eba billiards}
\author{J.P.~Keating${}^1$ \and J.~Marklof${}^1$ \and B.\ Winn${}^{2}$\\
{\protect\small\em ${}^1$ School of Mathematics,
University of Bristol, Bristol, BS8 1TW, U.K.}\\
{\protect\small\em ${}^2$ School of Mathematics, 
Loughborough University, Loughborough,
LE11 3TU, U.K. }}
\date{\printdate{18}{3}{2010}}
%{\bf Preliminary version - not for circulation}}
\maketitle
\begin{abstract}
We describe some new families of quasimodes for the Laplacian
perturbed by the addition of a potential formally described by
a Dirac delta function. As an application we find, under some
additional hypotheses on the spectrum, 
subsequences of eigenfunctions of \v{S}eba
billiards that localise around a pair of unperturbed eigenfunctions.
\end{abstract}
 
\thispagestyle{empty}

\section{Introduction}
One of the unsolved questions in the analysis of quantum eigenfunctions
concerns possible limiting distributions as the eigenvalue tends to infinity.
For eigenfunctions of the Laplace operator on certain surfaces with
arithmetical properties,
it has been proved \cite{lin:ima, sou:que} that there is only one possible
limit; all sequences of eigenfunctions become uniformly distributed. On the
other hand, Hassell \cite{has:ebt} has proved the existence of chaotic
billiard domains in $\R^2$ for which zero-density subsequences of eigenfunctions
fail to equi-distribute in the limit. 

We consider the Laplace operator plus potential supported at a single point.
Such a potential has been 
variously referred to as delta-interaction potential, 
Fermi pseudo-potential or zero-range potential in different parts of the
literature. Mathematically this operator can be constructed using the
tools of self-adjoint extension theory.

We will prove our results for the case where the underlying space is a
compact 2-dimensional manifold, for which the Laplace operator has
eigenfunctions and eigenvalues denoted by $\Phi_j$ and $E_j$ respectively.
We perturb this operator with a delta potential supported at the point 
$p$, which will remain fixed throughout, and suppressed from
notations. This perturbation
can be realised by a 1-parameter family of self-adjoint operators,
$H_\Theta$, indexed by an angle $\Theta$ which controls the strength of
the perturbation.

We fix a finite interval $I\subseteq\R$ containing at least one $E_j$, and
define, for notational convenience,
\begin{equation}
\zeta_I(s,\lambda)\coloneq \sum_{E_j\in I}\frac{|\Phi_j(p)|^2}
{(E_j-\lambda)^s}.
\end{equation}
Let $\sigma\in[0,1]$. We define
\begin{equation} \label{eq:intro:qqef}
\psi(x)\coloneq \sum_{E_j\in I}\frac{\overline{\Phi_j(p)}}{E_j-\qqev}
\Phi_j(x) + \sigma \sum_{E_j\not\in I}\left( E_j-
\frac{\sin\Theta}{1-\cos\Theta}\right)\frac{\overline{\Phi_j(p)}}{1+E_j^2}
\Phi_j(x),
\end{equation}
for $\qqev$ a solution to
\begin{equation}  \label{eq:intro:qqev}
\zeta_I(1,\mu)=\sigma \sum_{E_j\in I}\left( E_j-
\frac{\sin\Theta}{1-\cos\Theta}\right)\frac{|\Phi_j(p)|^2}{1+E_j^2}.
\end{equation}
% where $G$ is a linear combination of Green functions at energies $\pm\rmi$
% (see section \ref{sec:drei} for a precise description) and is
% independent of $I$ and $\sigma$, and
% $\curlyP$ is the projection operator onto the spectrum of the Laplacian.

Our main results are as follows:
\begin{theorem} \label{thm:quasi_discrep}
The pair $(\psi,\qqev)$ is a quasimode for $H_\Theta$ with
discrepancy $d$, where
\begin{equation}
d^2 = \frac{(1-\sigma)^2 \zeta_I(0,\mu) %\sum_{E_j\in I} |\Phi_j(p)|^2
 + \sigma^2\sum_{E_j\not\in I} \left( 1+E_j\qqev +\frac{\sin\Theta}{1-\cos\Theta}
(E_j-\qqev)\right)^2\frac{|\Phi_j(p)|^2}{(1+E_j^2)^2}}
{\zeta_I(2,\mu) + \sigma^2 \sum_{E_j\not\in I}
\left(E_j-\frac{\sin\Theta}{1-\cos\Theta}\right)^2\frac{|\phi_j(p)|^2}
{(1+E_j^2)^2}}
\end{equation}
Furthermore, if $\psi_1, \psi_2$ are defined by \eqref{eq:intro:qqef} for
$\qqev_1\neq \qqev_2$, two solutions of \eqref{eq:intro:qqev} then
  \begin{equation}
    \left\langle \psi_1, \psi_2 \right\rangle =
\sigma^2 \sum_{E_j\not\in I} \left( E_j-\frac{\sin\Theta}{1-\cos\Theta}\right)^2
\frac{|\Phi_j(p)|^2}{(1+E_j^2)^2}.
  \end{equation}
\end{theorem}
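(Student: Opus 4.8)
The plan is to read ``quasimode with discrepancy $d$'' in the usual sense, namely $\psi\in\dom(H_\Theta)$ together with $d^2=\|(H_\Theta-\mu)\psi\|^2/\|\psi\|^2$, so that the two displayed quantities are nothing but the numerator and denominator of this ratio, each evaluated by Parseval's identity in the orthonormal basis $\{\Phi_j\}$. The denominator is immediate: reading off the coefficients of $\psi$ from \eqref{eq:intro:qqef} and using orthonormality gives
\[
\|\psi\|^2 = \sum_{E_j\in I}\frac{|\Phi_j(p)|^2}{(E_j-\mu)^2} + \sigma^2\sum_{E_j\notin I}\Bigl(E_j-\tfrac{\sin\Theta}{1-\cos\Theta}\Bigr)^2\frac{|\Phi_j(p)|^2}{(1+E_j^2)^2},
\]
which is exactly $\zeta_I(2,\mu)$ plus the tail sum appearing in the stated denominator. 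It remains to compute $(H_\Theta-\mu)\psi$.

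The essential input is the action of the operator in the eigenbasis. Writing $f=\sum_j a_j\Phi_j$, the self-adjoint extension theory underlying the construction of $H_\Theta$ describes $\dom(H_0^{\ast})$ (the adjoint of the Laplacian restricted to functions supported away from $p$) as those $f$ for which there is a constant $c=c(f)$ — the coefficient of the $\delta_p$ singularity, unique because $\sum_j|\Phi_j(p)|^2=\infty$ — with $\sum_j|E_j a_j - c\,\overline{\Phi_j(p)}|^2<\infty$, and then $H_0^{\ast}f=\sum_j(E_j a_j - c\,\overline{\Phi_j(p)})\Phi_j$. Applied to $\psi$, the tail coefficients $a_j=\sigma(E_j-\frac{\sin\Theta}{1-\cos\Theta})\overline{\Phi_j(p)}/(1+E_j^2)$ obey $E_j a_j\to\sigma\overline{\Phi_j(p)}$, which forces $c(\psi)=\sigma$ and leaves a square-summable remainder, so $\psi\in\dom(H_0^{\ast})$ automatically. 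I would then verify that the defining relation \eqref{eq:intro:qqev} for $\mu$ is precisely the boundary condition selecting the extension $H_\Theta$, i.e. that it promotes $\psi$ from $\dom(H_0^{\ast})$ into $\dom(H_\Theta)$, so that $H_\Theta\psi=H_0^{\ast}\psi$. This identification, which amounts to matching the renormalised value of $\psi$ at $p$ to the coupling encoded in $\Theta$, is the step I expect to be the main obstacle, since it is where the two-dimensional logarithmic renormalisation enters; the routine Parseval computations below all presuppose it.

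Granting $\psi\in\dom(H_\Theta)$, the remainder $(H_\Theta-\mu)\psi=\sum_j\bigl(a_j(E_j-\mu)-\sigma\,\overline{\Phi_j(p)}\bigr)\Phi_j$ is evaluated coefficient by coefficient. For $E_j\in I$ one has $a_j(E_j-\mu)=\overline{\Phi_j(p)}$, so the coefficient collapses to $(1-\sigma)\overline{\Phi_j(p)}$; for $E_j\notin I$ a short rational simplification of $\sigma\overline{\Phi_j(p)}\bigl[(E_j-\frac{\sin\Theta}{1-\cos\Theta})(E_j-\mu)-(1+E_j^2)\bigr]/(1+E_j^2)$ gives $-\sigma\bigl(1+E_j\mu+\frac{\sin\Theta}{1-\cos\Theta}(E_j-\mu)\bigr)\overline{\Phi_j(p)}/(1+E_j^2)$. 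Parseval now produces the numerator: the $E_j\in I$ terms sum to $(1-\sigma)^2\zeta_I(0,\mu)$ (since $\zeta_I(0,\mu)=\sum_{E_j\in I}|\Phi_j(p)|^2$) and the $E_j\notin I$ terms to the stated tail sum, which together with the denominator above yields the discrepancy formula.

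For the final identity I would expand $\langle\psi_1,\psi_2\rangle$ by Parseval. The $E_j\notin I$ coefficients of $\psi_1$ and $\psi_2$ coincide and are independent of $\mu$, contributing exactly $\sigma^2\sum_{E_j\notin I}(E_j-\frac{\sin\Theta}{1-\cos\Theta})^2|\Phi_j(p)|^2/(1+E_j^2)^2$. The $E_j\in I$ contribution is $\sum_{E_j\in I}|\Phi_j(p)|^2/[(E_j-\mu_1)(E_j-\mu_2)]$, which by the partial fraction $\frac{1}{(E_j-\mu_1)(E_j-\mu_2)}=\frac{1}{\mu_1-\mu_2}\bigl(\frac{1}{E_j-\mu_1}-\frac{1}{E_j-\mu_2}\bigr)$ equals $(\mu_1-\mu_2)^{-1}\bigl(\zeta_I(1,\mu_1)-\zeta_I(1,\mu_2)\bigr)$. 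Since the right-hand side of \eqref{eq:intro:qqev} does not depend on $\mu$, both $\mu_1$ and $\mu_2$ give the same value of $\zeta_I(1,\cdot)$, so this term vanishes and only the tail sum survives, as asserted.
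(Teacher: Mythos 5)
Your Parseval computations all agree with the paper's: the denominator is the paper's Lemma~\ref{lem:quasi:norm}, your collapsed coefficients $(1-\sigma)\overline{\Phi_j(p)}$ on $I$ and $-\sigma\bigl(1+E_j\mu+\tfrac{\sin\Theta}{1-\cos\Theta}(E_j-\mu)\bigr)\overline{\Phi_j(p)}/(1+E_j^2)$ off $I$ reproduce the paper's formula for $(H_\Theta-\mu)\psi$ in Proposition~\ref{prop:quasi_discrep}, and your partial-fraction argument for $\langle\psi_1,\psi_2\rangle$, using that the right-hand side of \eqref{eq:intro:qqev} is independent of $\mu$, is exactly the paper's. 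But the one step you explicitly defer --- verifying that \eqref{eq:intro:qqev} is the boundary condition promoting $\psi$ from $\dom(H_0^\asterisk)$ into $\dom(H_\Theta)$ --- is the only non-routine content of the first assertion, and as you yourself say, everything else presupposes it. Note that establishing $c(\psi)=\sigma$ only places $\psi$ in $\dom(H_0^\asterisk)$: membership in $\dom(H_\Theta)$ requires the relation $a_-(\psi)=-\rme^{\rmi\Theta}a_+(\psi)$ between the two deficiency coefficients separately, which is strictly more information than the single constant $c=a_+(\psi)+a_-(\psi)$ that your eigenbasis description extracts. As written, the proposal therefore has a genuine gap. (You also nowhere check finiteness of the off-$I$ tail sums, needed even for $\psi\in L^2$; the paper does this in Lemma~\ref{lem:quasi:norm} by partial summation against Weyl's law, essentially the same estimate you invoke for $c(\psi)=\sigma$.)

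The gap closes in two lines, and your expectation that it involves the two-dimensional logarithmic renormalisation is misplaced in the von Neumann framework the paper uses. Add and subtract $\frac{\sigma}{1-\rme^{\rmi\Theta}}\curlyP_I(g_\rmi-\rme^{\rmi\Theta}g_{-\rmi})$ to write
\[
\psi \;=\; \hat\psi \;+\; \frac{\sigma}{1-\rme^{\rmi\Theta}}\left( g_\rmi-\rme^{\rmi\Theta}g_{-\rmi}\right),
\qquad
\hat\psi=\sum_{E_j\in I}\left( \frac1{E_j-\mu}-\frac{\sigma E_j}{1+E_j^2}
+\frac{\sigma\sin\Theta}{(1-\cos\Theta)(1+E_j^2)}\right)\overline{\Phi_j(p)}\,\Phi_j,
\]
using \eqref{eq:useful} to simplify the $I$-projection. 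The second summand has $a_+=\frac{\sigma}{1-\rme^{\rmi\Theta}}$ and $a_-=-\rme^{\rmi\Theta}\frac{\sigma}{1-\rme^{\rmi\Theta}}$, so it satisfies the $\Theta$-boundary condition identically, for every $\mu$; while $\hat\psi$ is a \emph{finite} linear combination of Laplace eigenfunctions, hence lies in $\dom(\Delta)$, and belongs to $\curlyD_p$ --- where no boundary condition applies --- precisely when $\hat\psi(p)=0$, which is a restatement of \eqref{eq:intro:qqev}. So the $\mu$-equation is used only to kill the value at $p$ of a smooth remainder, never to match a renormalised coupling constant, and no regularised boundary value of the Green function enters at any point. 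With that substitution your argument becomes complete and coincides with the paper's proof.
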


The construction of families of quasimodes is a key step in Hassell's
proof \cite{has:ebt}, as well as the proofs of many recent results
on localisation of quantum eigenfunctions \cite{bog:swf2, bur:bbm, col:maq,
don:que, hil:coe, mar:ql}. One reason for this
is that quasimodes can often be used to approximate eigenfunctions.
In general (see the introduction to section \ref{sec:drei} for precise
statements) the smaller the discrepancy, the closer quasimodes are
to true eigenfunctions. For this reason 
it is important to know when the discrepancy
can be made small. In this direction we have the following corollary
to theorem \ref{thm:quasi_discrep}:

% \begin{corollary} \label{cor:arbitrary:small}
% Let $\sigma=1$ and let $I=[0,T]$ where $T>E_1$. Then the discrepancy $d$
% of the quasimode $\psi$ satisfies
% \begin{equation}
% d\ll \frac{\qqev}{\sqrt{T}}.
% \end{equation}
% Let $\sigma=0$ and $I$ be any interval containing at least two $E_j$.
% If $\mu\in I$ then the discrepancy $d$ of $\psi$ satisfies
%  $d\leq2^{-1/2}\ell(I)$, where $\ell(I)$ is the length of $I$.
% \end{corollary}

\begin{corollary} \label{cor:arbitrary:small}
Let $\sigma=1$ and let $I=[0,T]$ where $T>E_1$. Then the discrepancy $d$
of the quasimode $\psi$ satisfies
\begin{equation}
d\ll \frac{\qqev}{\sqrt{T}}.
\end{equation}

Let $\sigma=0$ and $I$ be any interval containing at least two $E_j$.
If $\mu\in I$ then the discrepancy $d$ of $\psi$ satisfies
\begin{equation}d\leq \frac1{\sqrt{2}}\ell(I),\end{equation}
 where $\ell(I)$ is the length of $I$.
If, additionally, $I$ contains precisely two $E_j$, then we 
have \begin{equation} d\leq \frac12\ell(I). \end{equation}
\end{corollary}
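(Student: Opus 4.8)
The plan is to substitute the two special values of $\sigma$ directly into the discrepancy formula of Theorem~\ref{thm:quasi_discrep} and then estimate the resulting expressions. Throughout I write $w_j=|\Phi_j(p)|^2$ and $\kappa=\frac{\sin\Theta}{1-\cos\Theta}$, and I invoke the pointwise (local) Weyl law $\sum_{E_j\le\lambda}w_j=\Ord(\lambda)$, which by Abel summation gives the tail bound $\sum_{E_j>T}w_j/E_j^2\ll 1/T$. For $\sigma=1$ the term carrying $(1-\sigma)^2$ disappears, whereas for $\sigma=0$ every sum carrying a factor $\sigma^2$ vanishes and the formula collapses to the clean ratio
\[
d^2=\frac{\zeta_I(0,\mu)}{\zeta_I(2,\mu)}=\frac{\sum_{E_j\in I}w_j}{\sum_{E_j\in I}w_j/(E_j-\mu)^2}.
\]

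For the case $\sigma=1$, $I=[0,T]$, the numerator of $d^2$ is $\sum_{E_j>T}\bigl(1-\kappa\mu+E_j(\mu+\kappa)\bigr)^2 w_j/(1+E_j^2)^2$. For $E_j>T$ the bracket is dominated by its top-order part $E_j(\mu+\kappa)$ and is $\ll\mu E_j$ (provided $\mu$ stays bounded away from the bottom of the spectrum, as it does for the quasimodes under consideration), so each summand is $\ll\mu^2 w_j/E_j^2$ and the tail bound gives numerator $\ll\mu^2/T$. For the denominator I would discard the non-negative tail sum and retain only $\zeta_I(2,\mu)$, giving $d^2\ll\mu^2/(T\,\zeta_I(2,\mu))$. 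The one genuinely delicate point is then to see that $\zeta_I(2,\mu)\gg 1$: since $\mu$ is a root of \eqref{eq:intro:qqev} it lies in a spectral gap, whose length is $\Ord(1)$ by Weyl's law, so the term $w_k/(E_k-\mu)^2$ from the nearest eigenvalue $E_k$ is $\gg w_k$ and hence $\gg 1$ as long as the relevant eigenfunctions do not degenerate at $p$. With $\zeta_I(2,\mu)\gg 1$ one concludes $d\ll\mu/\sqrt{T}$.

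For $\sigma=0$ with exactly two eigenvalues $E_a<E_b$ in $I$, equation \eqref{eq:intro:qqev} reads $\zeta_I(1,\mu)=0$, i.e.\ $w_a/(E_a-\mu)+w_b/(E_b-\mu)=0$, solved explicitly by $\mu=(w_aE_b+w_bE_a)/(w_a+w_b)$. Substituting this back, one finds $\mu-E_a=w_a(E_b-E_a)/(w_a+w_b)$ and $E_b-\mu=w_b(E_b-E_a)/(w_a+w_b)$, whence a short computation gives
\[
d^2=(\mu-E_a)(E_b-\mu)=\frac{w_aw_b}{(w_a+w_b)^2}(E_b-E_a)^2\le\tfrac14(E_b-E_a)^2\le\tfrac14\ell(I)^2,
\]
the first inequality being AM--GM; this is precisely $d\le\tfrac12\ell(I)$.

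For $\sigma=0$ with an arbitrary number of eigenvalues in $I$, the constraint is again $\zeta_I(1,\mu)=\sum_j w_j/(E_j-\mu)=0$, which in particular places $\mu$ strictly between $\inf I$ and $\sup I$. Setting $y_j=E_j-\mu$, every $y_j$ lies in $[-(\mu-\inf I),\,\sup I-\mu]$ and $\sum_j w_j/y_j=0$. The key step I would use is a linearisation: choose the quadratic $q(y)=\alpha y^2+\beta y-1$ to vanish at the two endpoints of this interval, which forces $\alpha=1/[(\mu-\inf I)(\sup I-\mu)]$; since $q\le 0$ on the interval one has $1/y_j^2\ge\alpha+\beta/y_j$ for every $j$. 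Summing against $w_j$ and using the constraint to annihilate the $\beta$-term yields $\zeta_I(2,\mu)\ge\alpha\,\zeta_I(0,\mu)$, hence
\[
d^2\le\frac1\alpha=(\mu-\inf I)(\sup I-\mu)\le\tfrac14\ell(I)^2.
\]
This already implies the stated bound $d\le\tfrac1{\sqrt2}\ell(I)$ (indeed with the sharper constant $\tfrac12$), and it recovers the two-eigenvalue estimate as the extremal case. The only real obstacle in the whole corollary is thus the lower bound $\zeta_I(2,\mu)\gg 1$ needed in the $\sigma=1$ part; the $\sigma=0$ estimates are elementary once the secular constraint $\zeta_I(1,\mu)=0$ is exploited.
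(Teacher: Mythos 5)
Your proposal is correct, and one part of it takes a genuinely different route from the paper. The $\sigma=1$ case is essentially the paper's argument: the tail $\sum_{E_j>T}$ is estimated by $\mu^2/T$ via Weyl's law, and the norm is bounded below through $\zeta_I(2,\mu)$; the paper is just as terse as you are about why that lower bound holds (it simply cites lemma \ref{lem:quasi:norm}), so your explicit flagging of this as the delicate point is, if anything, more candid. The two-eigenvalue case is also the paper's computation repackaged: the paper substitutes the explicit $\mu$ into $\psi_{0,I}$, computes $\|\psi_{0,I}\|^2$ and uses $\bigl(|\Phi_j(p)|^2+|\Phi_{j+1}(p)|^2\bigr)\bigl(|\Phi_j(p)|^{-2}+|\Phi_{j+1}(p)|^{-2}\bigr)\ge 4$, which is your AM--GM step in disguise. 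The genuine divergence is the general $\sigma=0$ case: the paper splits $\sum_{E_j\in I}|\Phi_j(p)|^2$ at $\mu$, bounds each half by $(E_+-\mu)$ resp.\ $(\mu-E_-)$ times the corresponding half of the secular sum, and uses the balance identity \eqref{eq:split:evsum} followed by Cauchy--Schwarz-type estimates to reach $2d^2\le\ell(I)^2$, i.e.\ exactly the stated $d\le\ell(I)/\sqrt{2}$. Your quadratic-majorant trick --- choosing $q(y)=\alpha y^2+\beta y-1$ to vanish at the endpoints so that $1/y_j^2\ge\alpha+\beta/y_j$ on $I-\mu$, then killing the $\beta$-term with $\zeta_I(1,\mu)=0$ --- yields the sharper
\begin{equation*}
d^2\;\le\;(\mu-\inf I)(\sup I-\mu)\;\le\;\tfrac14\ell(I)^2,
\end{equation*}
improving the constant from $1/\sqrt{2}$ to $1/2$ for an arbitrary number of levels in $I$ and subsuming the two-level computation as the extremal case. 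Both arguments use only the constraint $\zeta_I(1,\mu)=0$ and the containment of the $E_j$ in $I$; yours simply extracts more from the same information, at the cost of a slightly less elementary (but still one-line) convexity step.
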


In particular, the quasimodes with $\sigma=1$ and $\mu$ held fixed or
slowly growing, can be made arbitrarily
precise by choosing $T$ as large as desired.

We are interested in ascertaining when true eigenfunctions of $H_\Theta$
have mass supported on our quasimodes. Without any assumptions on
the spectrum of the Laplacian we can prove the following.

\begin{proposition} \label{prop:sqrt3}
For any consecutive eigenvalues $\Ea<\Eb<\Ec<\Ed$ from the sequence
$(E_j)_{j=1}^\infty$, let $I=[\Eb,\Ec]$ and take $\sigma=0$. Choose $\mu$ so
that $\mu\in I$. Then
there is an eigenfunction $\phi$ of $H_\Theta$ with eigenvalue
in the interval $(\Ea,\Ed)$ such that
\begin{equation} \label{eq:intro:overlap}
|\langle \phi,\psi\rangle| \geq\frac{\|\psi\|}{\sqrt{3}}\left(1-
\frac{(\Ec-\Eb)^2} 
{4\min\{\Ed-\Ec,\Eb-\Ea\}^2}\right)^{1/2}.
\end{equation}
\end{proposition}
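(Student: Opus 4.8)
The plan is to combine the elementary quasimode spectral estimate with the interlacing structure of the spectrum of $H_\Theta$, finishing by a pigeonhole argument together with the discrepancy bound already obtained in Corollary~\ref{cor:arbitrary:small}. To set up, observe that with $\sigma=0$ and $I=[\Eb,\Ec]$ the quasimode is $\psi=\sum_{E_j\in I}\frac{\overline{\Phi_j(p)}}{E_j-\mu}\Phi_j$, a linear combination of the two Laplace eigenfunctions with eigenvalues $\Eb$ and $\Ec$, and that \eqref{eq:intro:qqev} forces $\mu\in(\Eb,\Ec)$. Since the underlying space is a compact manifold, $H_\Theta$ has purely discrete spectrum, so $\psi$ admits an orthonormal eigenfunction expansion to which the functional calculus applies.

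Next I would introduce the spectral projection $P$ of $H_\Theta$ onto $(\Ea,\Ed)$ and set $\Delta:=\min\{\Ed-\Ec,\Eb-\Ea\}$. Because $\mu\in(\Eb,\Ec)$, every spectral point $\lambda$ of $H_\Theta$ lying outside $(\Ea,\Ed)$ satisfies $|\lambda-\mu|\geq\Delta$. Hence $\|(H_\Theta-\mu)(I-P)\psi\|^2\geq\Delta^2\|(I-P)\psi\|^2$, while the left-hand side is at most $\|(H_\Theta-\mu)\psi\|^2=d^2\|\psi\|^2$ since $H_\Theta-\mu$ commutes with $P$ and the ranges of $P$ and $I-P$ are orthogonal. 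This yields $\|P\psi\|^2\geq(1-d^2/\Delta^2)\|\psi\|^2$.

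The step carrying the real content is to count the eigenvalues of $H_\Theta$ inside $(\Ea,\Ed)$, and here I would invoke the interlacing property arising from the self-adjoint-extension (spectral determinant) equation for $H_\Theta$: between two consecutive Laplace eigenvalues $E_j$ with $\Phi_j(p)\neq0$ there lies exactly one eigenvalue of $H_\Theta$. Applied to the three gaps $(\Ea,\Eb)$, $(\Eb,\Ec)$, $(\Ec,\Ed)$, this shows $P$ has rank three, so $P\psi=\sum_{k=1}^{3}\langle\psi,\phi^{(k)}\rangle\phi^{(k)}$ for orthonormal eigenfunctions $\phi^{(1)},\phi^{(2)},\phi^{(3)}$ of $H_\Theta$ with eigenvalues in $(\Ea,\Ed)$. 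By pigeonhole one of them, call it $\phi$, satisfies $|\langle\phi,\psi\rangle|^2\geq\frac13\|P\psi\|^2\geq\frac13(1-d^2/\Delta^2)\|\psi\|^2$. Inserting the bound $d\leq\frac12(\Ec-\Eb)$ from Corollary~\ref{cor:arbitrary:small} (the case where $I$ contains precisely two $E_j$) gives $d^2/\Delta^2\leq(\Ec-\Eb)^2/(4\Delta^2)$, and hence exactly~\eqref{eq:intro:overlap}.

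I expect the main obstacle to be the third step, namely proving cleanly that precisely three eigenvalues of $H_\Theta$ lie in $(\Ea,\Ed)$. This requires the explicit spectral equation for $H_\Theta$ together with its monotonicity between consecutive poles, the genericity condition that $\Phi_j(p)\neq0$ for the two $E_j\in I$ (which is automatic once $\psi\neq0$), and a verification that no eigenvalue of $H_\Theta$ coincides with the endpoints $\Ea$ or $\Ed$, so that the distance bound $|\lambda-\mu|\geq\Delta$ is valid on the whole complementary spectrum. The remaining steps are routine spectral calculus and arithmetic.
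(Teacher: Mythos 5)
Your proposal is correct and follows essentially the same route as the paper: the quasimode projection estimate \eqref{eq:quasi:proj} with $M=\min\{\Ed-\mu,\mu-\Ea\}\geq\min\{\Ed-\Ec,\Eb-\Ea\}$, the discrepancy bound $d\leq\tfrac12\ell(I)$ from corollary \ref{cor:arbitrary:small}, the count of exactly three eigenvalues of $H_\Theta$ in $(\Ea,\Ed)$ (which the paper draws from theorem \ref{thm:no:other} rather than re-deriving the interlacing), and a pigeonhole step. The only difference is that you re-prove the projection inequality via the spectral theorem where the paper simply cites it as a classical fact.
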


Proposition \ref{prop:sqrt3} is most interesting when the sequence
of eigenfunctions $\Phi_j$ do not equi-distribute. (For example if
they are solutions to a PDE which is subject to separation of variables;
see below.) Then, by considering an infinite subset of the 
spectrum $\{E_j\}$ along which the right-hand side of 
\eqref{eq:intro:overlap} is bounded away from zero, proposition
\ref{prop:sqrt3} proves the existence of a sequence of eigenfunctions
of $H_\Theta$ which fail to equi-distribute. Such a subset of $\{E_j\}$ does
exist since the mean level spacing is constant.

Clearly the best that proposition \ref{prop:sqrt3}
can achieve is to prove that a sequence
of quasimodes has an overlap of up to $1/\sqrt{3}$ with a subsequence of
true eigenfunctions. In order to prove that a sequence of quasimodes
converges {\em fully\/} towards a sequence of eigenvalues of $H_\Theta$ we 
need to make some assumptions on the spectrum of the Laplacian. 
Sufficient conditions for this and a precise statement of 
the result (theorem \ref{thm:seba}) are given in section \ref{sec:vier}.

The history of the study of the spectral properties of differential operators
perturbed by the addition of a delta scatterer goes back
at least to \cite{kro:qme}, in which a one-dimensional lattice of
delta interactions was used to model an electron moving in a
crystal lattice. A comprehensive historical review is given in the
appendix to the book \cite{alb:spd}.

%applications to the study of automorphic forms \cite{cdv:plI}

Part of our interest in the subject comes from the \v{S}eba billiard 
which was introduced in \cite{seb:wcs}. In this work, a hard-walled
rectangular billiard with a potential supported at a single point was 
considered.
In terms of classical dynamics, the motion is integrable, since only
a zero-measure set of trajectories meet the point at which the 
potential is supported. However, diffraction effects are introduced
when one considers the quantum spectrum of the corresponding 
Schr\"odinger operator.

\v{S}eba billiards have become important since the observation 
\cite{seb:wci,shi:lsd} that the quantum spectral statistics 
belong to a new universality class, different from the classes from
 random matrix theory conjecturally associated to chaotic dynamical systems 
\cite{boh:ccq, cas:otc} or the statistics of a Poisson process conjecturally
associated
to fully-integrable dynamical systems \cite{ber:lcr}.  It is now known 
that general integrable systems perturbed by the addition of such a localised
scatterer also belong to the same universality class \cite{alb:wcq}, as
do quantum Neumann star graphs \cite{ber:tps,ber:sgs}. Characteristic
features of the spectral statistics of this universality class are an
exponential decay of large level spacings, together with level repulsion.

Several analytical studies of these spectral statistics have been made
\cite{alb:wcq,bog:sct, bog:ss, bog:nnd, rah:ssr, rah:ppc}. Typically,
a key feature of these arguments is the assumption of Poissonian behaviour
for the eigenvalues of the billiard table without scatterer, 
%as asserted
a conjectured consequence of the integrable dynamics
(the Berry-Tabor conjecture) \cite{ber:lcr}.

In the final section of this article we apply theorem \ref{thm:seba} to
%%
%% In order to do this we construct some families of quasimodes. This
%% part of the article is rather general, and does not depend on the
%% spectral properties of the unperturbed operator. The precise
%% setup is described in section \ref{sec:zwei} and the construction
%% of the quasimodes is described in section \ref{sec:drei}. We prove 
%% (prop.~\ref{prop:arbitrary:small}) that
%% some of of quasimodes can be made arbitrarily precise,
%% and investigate the orthogonality of our family of quasimodes.
%%
%% In section \ref{sec:vier} we prove (theorem \ref{thm:seba}) 
%% a result about when eigenfunctions of the perturbed operator
%% approach a linear combination of two unperturbed eigenfunctions, under
%% some assumptions about the spectrum, which are also described
%% in section \ref{sec:vier}. In section \ref{sec:fuenf} we apply this
%% result to 
the original \v{S}eba billiard. Our final result is a proof that there
exists a subsequence of eigenfunctions of the \v{S}eba billiard that
become localised on a pair of consecutive eigenfunctions of the
unperturbed billiard, if the spectrum of the unperturbed billiard
satisfies an assumption which is consistent with the Berry-Tabor
conjecture. %This proves that there are subsequences of eigenfunctions
% that do not become equi-distributed in phase space in the classical
% limit. 

This result is a 
rigorous derivation of a formal argument first proposed in \cite{ber:iws}
and mirrors a related result proved for quantum graphs with a star-shaped
connectivity \cite{ber:nqe}. These so-called quantum star graph can be 
considered as a singular perturbation of a disconnected set of one-dimensional 
bonds, each supporting a wave-function. In \cite{ber:nqe} the existence
of subsequences of eigenfunctions that become localised on a pair of bonds
was proved. This is exactly analagous to the localisation onto a pair
of unperturbed billiard eigenfunctions in theorem \ref{thm:seba}. In 
both \cite{ber:nqe} and theorem \ref{thm:seba} the main idea of the proof
is to show localisation in an eigenfunction with eigenvalue lying between
two closely-spaced eigenvalues of the unperturbed problem.

\section{Realisation of the perturbed operator} \label{sec:zwei}

Let $\curlyM$ be a compact 2-dimensional Riemannian manifold, possibly
with piecewise-smooth
boundary, and let $\Delta$ be a self-adjoint Laplacian on $\curlyM$.

The realisation of the operator formally defined by
\begin{equation} \label{eq:formal:H}
H = -\Delta + c\delta(x-p),
\end{equation}
where $p\in\curlyM$ and $\delta$ is the Dirac delta function, using the
theory of self-adjoint extensions is given in many places in the literature.
We refer the reader to \cite{cdv:plI, zor:psa} for the details. Here
we recapitulate only that which is necessary to fix notations.
We denote by $\|\cdot\|$ and $\langle\cdot,\cdot\rangle$ the norm
and inner product of $L^2(\curlyM)$.

Since $\curlyM$ is compact, $-\Delta$ has a complete basis of 
eigenfunctions, $\Phi_j$, with corresponding eigenvalues $E_j$
which we write in non-decreasing order.

We will remove from the list of eigenvalues any $E_j$ for which $\Phi_j(p)=0$.
Such eigenfunctions are not affected by a delta-scatterer at $p$, and
so it is convenient to exclude them from the spectrum. 
This further allows us to assume that the spectrum $\{E_j\}$ is simple, 
without losing generality. 

To see this, consider an eigenspace of dimension
$r>1$ spanned by the eigenfunctions $\{\tilde\phi_1,\ldots\tilde\phi_r\}$.
Then the vectors $(\tilde\phi_1(p),\ldots,\tilde\phi_r(p))^{\rm T}$ and
$(R,0,\ldots,0)^{\rm T}$ in $\C^r$, where 
\begin{equation}
  |R|^2=\sum_{i=1}^r |\tilde\phi_i(p)|^2,
\end{equation}
have identical norm. This means that we can find a unitary $r\times r$ 
matrix mapping the first vector to the second. Multiplying $U$ by the
vector of eigenfunctions 
$(\tilde\phi_1,\ldots,\tilde\phi_r)^{\rm T}$  leads to a new basis for the 
eigenspace, in which all but the first eigenfunction vanishes at the point
$p$, and the corresponding eigenvalue is counted with multiplicity one.

The resulting spectrum is therefore ordered so that
\begin{equation}
E_1<E_2 < E_3 \cdots
\end{equation}

% We will assume that the spectrum of $\Delta$ is simple, and we will
% remove from the list of eigenvalues any $E_j$ for which $\Phi_j(p)=0$.
% Such eigenfunctions are not affected by a delta-scatterer at $p$, and
% so it is convenient to exclude them from the spectrum. 
% This allows us to handle degeneracies in the spectrum: if we have a 
% situation in which some eigenvalues of $\Delta$ are
% degenerate, we will choose a basis for the corresponding eigenspace
% so that all but one eigenfunction vanishes at the point $p$ (this can
% always be done), thus removing the degeneracy.

We will frequently use Weyl's law with remainder estimate
\cite{ivr:tst}:
\begin{equation} \label{eq:weyl}
N(E)\coloneq \sum_{E_j\leq E} |\Phi_j(p)|^2 = \frac{E}{4\pi}
+\Ord(E^{1/2}),
\end{equation}
where the implied constant\footnote{The notations $f=\Ord(g)$ and $f\ll g$ 
both mean that there exists a positive constant $C$ (the ``implied constant'')
such that $f\leq Cg$.}
may depend on the position of the point $p\in\curlyM$.

Define
\begin{equation}
g_{\pm\rmi}(x)\coloneq \sum_{j=1}^\infty \frac{\Phi_j(x)\overline{\Phi_j(p)}}
{E_j\mp \rmi}.
\end{equation}
Then $g_{\pm\rmi}\in L^2(\curlyM)$ and in fact they are the Green
functions for the resolvent of $-\Delta$ at the imaginary energies $\pm\rmi$,
satisfying
\begin{equation}
\langle f,g_{\pm \rmi}\rangle = (-\Delta\pm\rmi)^{-1}f(p).
\end{equation}
In particular, 
\begin{align} \nonumber
      \langle \Phi_j, g_{\pm\rmi}\rangle 
&=((-\Delta\pm\rmi)^{-1}\Phi_j)(p)\\
&=\frac{\Phi_j(p)}{E_j\pm\rmi}, \label{eq:lem:eins}
\end{align}
which will be useful to know later.

% The following calculations will be useful later.
% \begin{lemma} With notation as above:         \label{lem:eins}
%   \begin{enumerate}
%   \item $\displaystyle \langle \Phi_j, g_{\pm \rmi}\rangle = 
% \frac{\Phi_j(p)}{E_j\pm\rmi}$.
%  \item $\| g_{\pm\rmi}\|^2 = \displaystyle\sum_{j=1}^\infty
%  \frac{|\Phi_j(p)|^2}{1+E_j^2}$.
%   \item $\langle g_{+\rmi}, g_{-\rmi}\rangle = \displaystyle 
% \sum_{j=1}^\infty \frac{|\Phi_j(p)|^2}{(E_j-\rmi)^2}$.
%   \end{enumerate}
% \end{lemma}
% \begin{proof}
%   \begin{enumerate}
%   \item We have,
%     \begin{align} \nonumber
%       \langle \Phi_j, g_{\pm\rmi}\rangle 
% &=((-\Delta\pm\rmi)^{-1}\Phi_j)(p)\\
% &=\frac{\Phi_j(p)}{E_j\pm\rmi}.
%     \end{align}
% \item We use the result of the previous calculation,
%   \begin{align} \nonumber
%     \| g_{\pm\rmi}\|^2 &= \sum_{j=1}^\infty
%  \overline{\langle \Phi_j, g_{\pm\rmi}\rangle}
% \langle \Phi_j, g_{\pm\rmi}\rangle \\
% &=\sum_{j=1}^\infty \frac{|\Phi_j(p)|^2}{1+E_j^2}.
%   \end{align}
% \item Again, by part 1, 
%   \begin{align} \nonumber
%     \langle g_{+\rmi}, g_{-\rmi}\rangle  &= \sum_{j=1}^\infty \overline{\langle 
% \Phi_j, g_{+\rmi}\rangle}
% \langle \Phi_j, g_{-\rmi}\rangle \\
% &=\sum_{j=1}^\infty \frac{|\Phi_j(p)|^2}{(E_j-\rmi)^2}.
%   \end{align}
%   \end{enumerate}
% \end{proof}

Let 
\begin{equation}
 \curlyD_p \coloneq \left\{ f\in \dom(\Delta)
 : \langle f,\delta_p\rangle =0 \right\},
\end{equation}
and define the operator $H_0$ with domain $\curlyD_p$ by
\begin{equation}
H_0 : f \mapsto -\Delta f.
\end{equation}
$H_0$ is a symmetric, but not self-adjoint operator. In fact its
deficiency subspaces are spanned by $g_{\pm\rmi}$.

It follows from the {von Neumann} theory \cite{neu:aeh} that, 
\begin{equation}
\dom(H_0^\asterisk)=\curlyD_p \oplus \mathop{\rm span}\{ g_\rmi, 
g_{-\rmi}\}.
\end{equation}
Since the deficiency indices
are equal, $H_0$ possesses self-adjoint extensions, constructed as
follows.

First of all, note that we can write for $\psi\in\dom(H_0^*)$,
\begin{equation}
  \psi = \hat\psi + a_+(\psi) g_\rmi + a_-(\psi) g_{-\rmi},
\end{equation}
where $\hat\psi\in\curlyD_p$ and $a_\pm(\psi)\in\C$.
In fact we have
\begin{equation}
H_0^\asterisk\psi = 
H_0 \hat\psi + \rmi a_+(\psi) g_\rmi -\rmi  a_-(\psi) g_{-\rmi}. 
\end{equation}
Since the deficiency indices of $H_0$ are both equal to 1, 
there is a 1-parameter
family of self-adjoint extensions, $H_\Theta$, $0 < \Theta \leq 2\pi$ with
\begin{equation}
\dom(H_\Theta) = \{\psi\in\dom(H_0^*) : a_-(\psi) = -\rme^{\rmi \Theta}
a_+(\psi)\}.
\end{equation}
We take the self-adjoint operator $H_\Theta$ to be the realisation
of the formal operator \eqref{eq:formal:H}.

\section{Quasimodes} \label{sec:drei}

\subsection{Definitions and basic properties}

Let $H$ be a self-adjoint operator in a Hilbert
space, without continuous spectrum.

\begin{definition}
  A quasimode of $H$ with discrepancy $d$ is a pair 
$(\psi,\mu)\in\dom(H)\times\R$ such that
\begin{equation}
\| (H-\mu)\psi\| \leq d \| \psi \|.
\end{equation}
\end{definition}

We are interested in the situation when the quasi-eigenvalue $\mu$
and quasi-eigenfunction $\psi$ approximate true eigenvalues $\lambda_j$ and
eigenfunctions $\phi_j$ of $H$. In this direction, the following classical
results apply (see, e.g. \cite{laz:Kts,mar:ql})

For a quasimode with discrepancy $d$, the interval $[\mu-d,\mu+d]$
contains at least one eigenvalue of $H$.

If we consider instead, the interval $[\mu-M,\mu+M]$ where $M>0$, then
\begin{equation} \label{eq:quasi:proj}
  \sum_{\lambda_j\not\in[\mu-M,\mu+M]} |\langle \psi,\phi_j\rangle|^2 \leq
\frac{d^2}{M^2}\|\psi\|^2.
\end{equation}

In particular, if $\psi$ is normalised, and the interval $[\mu-M,\mu+M]$
contains only a single eigenvalue with eigenfunction $\phi$, 
then there is a phase $\chi\in[0,2\pi)$ such that
\begin{equation} \label{eq:quasi:approx}
 \| \phi-\rme^{\rmi\chi}\psi \| \leq \frac{2d}M.
\end{equation}

These results will be the main tools by which we relate the quasimodes 
constructed in the next subsection to the eigenfunctions and eigenvalues of
$H_\Theta$.

\subsection{Quasimodes of delta perturbations}
Let $I\subseteq\R$ be a finite interval containing at least one point
$E_j$ of the spectrum of $-\Delta$. Let $\sigma\in[0,1]$. We will
associate to the interval $I$ a family of quasimodes parametrised
by $\sigma$.

We first define
\begin{equation} \label{eq:better:qmodes}
\qqqef{z}\coloneq
 \sum_{E_j\in I} \frac{\overline{\Phi_j(p)}}{E_j-z}\Phi_j
+\frac \sigma{1-\rme^{\rmi\Theta}}\curlyP_{I^{\rm c}}\!\left( g_\rmi-\rme^{\rmi\Theta}
g_{-\rmi}\right), 
\end{equation}
where $\curlyP_S$ is the spectral projection operator onto the
set $S$,
\begin{equation}
\curlyP_Sf\coloneq\sum_{E_j\in S} \langle f,\Phi_j\rangle \Phi_j,
\end{equation}
and $I^{\rm c}$ is the complement to $I$. We have the following:
\begin{lemma} \label{lem:quasi:norm}
  For $z\neq E_j$ for any $E_j\in I$, the function $\qqqef{z}$ satisfies
\begin{equation}
\|\qqqef{z}\|^2 = \zeta_I(2,z) + \sigma^2\sum_{E_j\not\in I}\left( E_j-
\frac{\sin\Theta}{1-\cos\Theta}\right)^2\frac{|\Phi_j(p)|^2}{(1+E_j^2)^2},
%\sum_{E_j\in I}\frac{|\Phi_j(p)|^2}{(E_j-z)^2} +\Ord(1),
\end{equation}
with the second term being bounded by a constant independent of $I, z$ 
and $\sigma\in[0,1]$.
\end{lemma}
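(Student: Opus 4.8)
The plan is to exploit the fact that the two summands in \eqref{eq:better:qmodes} lie in orthogonal subspaces. The first term is a combination of the eigenfunctions $\Phi_j$ with $E_j\in I$, whereas the second term, because of the spectral projection $\curlyP_{I^{\rm c}}$, is supported on the $\Phi_j$ with $E_j\notin I$. Since $\{\Phi_j\}$ is an orthonormal basis, these two vectors are orthogonal, so that $\|\qqqef{z}\|^2$ splits as the sum of the squared norms of the two pieces, which I would then compute separately.

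For the first piece, orthonormality gives squared norm $\sum_{E_j\in I}|\Phi_j(p)|^2/(E_j-z)^2$, which for real $z$ (the only case relevant to a quasi-eigenvalue) is precisely $\zeta_I(2,z)$. For the second piece I would substitute the eigenfunction expansions of $g_{\pm\rmi}$, so that the coefficient of $\Phi_j$ (for $E_j\notin I$) becomes $\dfrac{\sigma\,\overline{\Phi_j(p)}}{1-\rme^{\rmi\Theta}}\Bigl(\dfrac{1}{E_j-\rmi}-\dfrac{\rme^{\rmi\Theta}}{E_j+\rmi}\Bigr)$.

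The crux of the argument --- and essentially the only step that is not bookkeeping --- is the algebraic simplification of this coefficient. Placing the bracket over the common denominator $E_j^2+1$ and collecting terms gives numerator $E_j(1-\rme^{\rmi\Theta})+\rmi(1+\rme^{\rmi\Theta})$; dividing by $1-\rme^{\rmi\Theta}$ and using the half-angle identity $(1+\rme^{\rmi\Theta})/(1-\rme^{\rmi\Theta})=\rmi\cot(\Theta/2)$ together with $\cot(\Theta/2)=\sin\Theta/(1-\cos\Theta)$, the combination collapses to $\dfrac{1}{E_j^2+1}\Bigl(E_j-\dfrac{\sin\Theta}{1-\cos\Theta}\Bigr)$, which is \emph{real}. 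Taking the modulus squared of the coefficient and summing over $E_j\notin I$ then reproduces exactly the second term in the statement, completing the identity for $\|\qqqef{z}\|^2$. I expect this real-valued collapse to be the one place where care is needed; the rest is Parseval applied to an orthogonal decomposition.

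For the final assertion I would argue that the second term is uniformly bounded. Since $\sigma\in[0,1]$ we may replace $\sigma^2$ by $1$, and since every summand is nonnegative, dropping the restriction $E_j\notin I$ only enlarges the sum; thus it is at most $\sum_j (E_j-c)^2|\Phi_j(p)|^2/(1+E_j^2)^2$ with $c=\sin\Theta/(1-\cos\Theta)$ a fixed constant. Using $(E_j-c)^2\le 2(1+E_j^2)+2c^2$ bounds the summand by $(2+2c^2)|\Phi_j(p)|^2/(1+E_j^2)$, so the whole sum is at most $(2+2c^2)\sum_j|\Phi_j(p)|^2/(1+E_j^2)=(2+2c^2)\|g_\rmi\|^2$. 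This is finite because $g_\rmi\in L^2(\curlyM)$ (equivalently, by Weyl's law \eqref{eq:weyl} and an integration by parts), and the bound manifestly does not involve $I$, $z$ or $\sigma$.
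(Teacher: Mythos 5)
Your proposal is correct and follows essentially the same route as the paper: an orthogonal splitting of $\qqqef{z}$, Parseval, and the same algebraic collapse of the coefficient $\frac{1}{E_j-\rmi}-\frac{\rme^{\rmi\Theta}}{E_j+\rmi}$ via the identity $\rmi(1+\rme^{\rmi\Theta})/(1-\rme^{\rmi\Theta})=-\sin\Theta/(1-\cos\Theta)$ into the real quantity $(E_j-\sin\Theta/(1-\cos\Theta))/(1+E_j^2)$. The only (immaterial) divergence is the final boundedness claim, where you use the elementary estimate $(E_j-c)^2\leq 2(1+E_j^2)+2c^2$ to reduce to $\|g_\rmi\|^2<\infty$, whereas the paper writes the full sum as a Riemann--Stieltjes integral and integrates by parts against $N(t)\ll t$; both rest on Weyl's law and give a constant independent of $I$, $z$ and $\sigma$.
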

\proof
We have
\begin{equation} \label{eq:nuovo:drei}
\|\qqqef{z}\|^2 = \sum_{E_j\in I}\frac{|\Phi_j(p)|^2}{(E_j-z)^2} +
\frac{\sigma^2}{|1-\rme^{\rmi\Theta}|^2}\left\| \curlyP_{I^{\rm c}}(g_\rmi -
\rme^{\rmi\Theta}g_{-\rmi})\right\|^2.
\end{equation}
By (\ref{eq:lem:eins}) we get
\begin{align}
  \langle \Phi_j, g_\rmi - \rme^{\rmi\Theta} g_{-\rmi}\rangle &= \nonumber
\left( \frac1{E_j+\rmi} - \frac{\rme^{-\rmi\Theta}}{E_j-\rmi}\right)\Phi_j(p)\\
&=\frac{E_j(1-\rme^{-\rmi\Theta})-\rmi(1+\rme^{-\rmi\Theta})}{1+E_j^2}
\Phi_j(p) \nonumber \\
&=(1-\rme^{-\rmi\Theta})\left( \frac{E_j}{1+E_j^2} - \frac{\sin\Theta}
{1-\cos\Theta}\frac1{1+E_j^2}\right)\Phi_j(p), \label{eq:useful}
\end{align}
using
\begin{equation} \label{eq:nice:identity}
  \rmi\frac{1+\rme^{\rmi\Theta}}{1-\rme^{\rmi\Theta}} = \frac{-\sin\Theta}
{1-\cos\Theta}.
\end{equation}
By Parseval's identity
\begin{equation}
 \frac1{|1-\rme^{\rmi\Theta}|^2 }   \label{eq:parseval}
\left\| \curlyP_{I^c}(g_\rmi - \rme^{\rmi\Theta}g_{-\rmi})\right\|^2
=\sum_{E_j\not\in I} \left( E_j - \frac{\sin\Theta}
{1-\cos\Theta}\right)^2\frac{|\Phi_j(p)|^2}{(1+E_j^2)^2}.
%&=\Ord(1), \nonumber
\end{equation}
Finally, to show that the right-hand side of \eqref{eq:parseval} is finite
and does not depend on $I$, we
observe that it is bounded by
\begin{equation}
  \sum_{j=1}^\infty \left( E_j - \frac{\sin\Theta}
{1-\cos\Theta}\right)^2\frac{|\Phi_j(p)|^2}{(1+E_j^2)^2} = 
\int_0^\infty \left(\frac1{1+t^2}
\left( t-\frac{\sin\Theta}{1-\cos\Theta}\right)\right)^2\,\rmd N(t),
\end{equation}
writing the sum as a Riemann-Stieltjes integral. The spectral counting function
$N(t)$ was defined in \eqref{eq:weyl}. Integrating by parts, we get
\begin{equation}
\label{eq:weyl:step}
  \sum_{E_j\not\in I} \left( E_j - \frac{\sin\Theta}
{1-\cos\Theta}\right)^2\frac{|\Phi_j(p)|^2}{(1+E_j^2)^2} \leq 
-\int_0^\infty \frac{\rmd}{\rmd t} \left(\frac1{1+t^2}\left(
 t-\frac{\sin\Theta}{1-\cos\Theta}\right)\right)^2 N(t) \,\rmd t.
\end{equation}
Since $N(t)\ll t$ by Weyl's law, we see that
the integral in \eqref{eq:weyl:step} is a finite 
constant. \finire

Let $\qqev=\qqev(\sigma,I)$ be a solution to
\begin{equation} \label{eq:qqev}
\sum_{E_j\in I} \frac{|\Phi_j(p)|^2}{E_j-\qqev} = \frac \sigma{1-\rme^{\rmi\Theta}}
\curlyP_I\!\left( g_\rmi - \rme^{\rmi \Theta}g_{-\rmi}\right)(p).
\end{equation}
Then the pair $(\qqef,\qqev)$ is a quasimode for $H_\Theta$, 
where $\qqef\coloneq\qqqef{{\qqev}}$. 
This follows from the following proposition:

\begin{proposition} \label{prop:quasi_discrep}
  The function $\qqef$ belongs to $\dom(H_\Theta)$ and satisfies
\begin{equation}
  \| (H_\Theta-\qqev)\qqef\|^2  = (1-\sigma)^2 \zeta_I(0,\mu)
%\sum_{E_j\in I} |\Phi_j(p)|^2
 + \sigma^2\sum_{E_j\not\in I} \left( 1+E_j\qqev +\frac{\sin\Theta}{1-\cos\Theta}
(E_j-\qqev)\right)^2\frac{|\Phi_j(p)|^2}{(1+E_j^2)^2}.
\end{equation}
\end{proposition}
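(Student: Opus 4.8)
The plan is to verify two things: that $\qqef\in\dom(H_\Theta)$, and then to compute $\|(H_\Theta-\qqev)\qqef\|^2$ directly by decomposing $\qqef$ according to the von~Neumann splitting established in section~\ref{sec:zwei}. First I would identify the three pieces of $\qqef=\qqqef{\qqev}$ as living in the decomposition $\psi=\hat\psi+a_+(\psi)g_\rmi+a_-(\psi)g_{-\rmi}$. The second term of \eqref{eq:better:qmodes}, namely $\frac{\sigma}{1-\rme^{\rmi\Theta}}\curlyP_{I^{\rm c}}(g_\rmi-\rme^{\rmi\Theta}g_{-\rmi})$, is built precisely from $g_\rmi$ and $g_{-\rmi}$, so I expect the coefficients to be $a_+=\frac{\sigma}{1-\rme^{\rmi\Theta}}$ and $a_-=-\frac{\sigma\rme^{\rmi\Theta}}{1-\rme^{\rmi\Theta}}$, which manifestly satisfy the boundary condition $a_-=-\rme^{\rmi\Theta}a_+$ defining $\dom(H_\Theta)$. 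The subtlety is that the projection $\curlyP_{I^{\rm c}}$ only keeps the components outside $I$; I would argue that the remaining in-$I$ components of $g_\rmi,g_{-\rmi}$ are exactly cancelled or absorbed by the first sum $\sum_{E_j\in I}\frac{\overline{\Phi_j(p)}}{E_j-z}\Phi_j$ together with the defining equation \eqref{eq:qqev} for $\qqev$, so that $\hat\psi$ genuinely lands in $\curlyD_p$. Checking that $\hat\psi\in\curlyD_p$, i.e.\ $\langle\hat\psi,\delta_p\rangle=0$, is where equation \eqref{eq:qqev} does its real work, and this is the step I expect to be the main obstacle.

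Once the decomposition is pinned down, I would apply the explicit action
\begin{equation}
H_0^\asterisk\qqef = H_0\hat\psi + \rmi a_+ g_\rmi - \rmi a_- g_{-\rmi},
\end{equation}
subtract $\qqev\qqef$, and expand everything back into the eigenbasis $\{\Phi_j\}$. The manifold splits the sum into the in-$I$ part and the out-of-$I$ part. For $E_j\in I$, I expect the contributions from $H_0\hat\psi$ and from $-\qqev\qqef$ to combine so that each coefficient picks up a factor proportional to $(1-\sigma)$; this is the origin of the $(1-\sigma)^2\zeta_I(0,\mu)$ term, since $\zeta_I(0,\mu)=\sum_{E_j\in I}|\Phi_j(p)|^2$ is exactly the sum of squared weights with no energy denominators. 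For $E_j\not\in I$, I would use the Green-function expansion
\begin{equation}
\langle\Phi_j,g_\rmi-\rme^{\rmi\Theta}g_{-\rmi}\rangle = (1-\rme^{-\rmi\Theta})\left(\frac{E_j}{1+E_j^2}-\frac{\sin\Theta}{1-\cos\Theta}\frac{1}{1+E_j^2}\right)\Phi_j(p)
\end{equation}
from \eqref{eq:useful}, already computed in Lemma~\ref{lem:quasi:norm}. Applying $(H_0^\asterisk-\qqev)$ to the out-of-$I$ block multiplies the $g_\rmi$ coefficient by $(\rmi-\qqev)$ and the $g_{-\rmi}$ coefficient by $(-\rmi-\qqev)$, and the main calculation is to show that combining these with the Parseval weights produces the factor $1+E_j\qqev+\frac{\sin\Theta}{1-\cos\Theta}(E_j-\qqev)$ appearing squared in the statement.

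The heart of the computation is therefore a careful bookkeeping of how the energy factors from $(H_\Theta-\qqev)$ interact with the $\frac{1}{E_j\pm\rmi}$ weights in $g_{\pm\rmi}$; I anticipate that the algebra mirrors the identity \eqref{eq:nice:identity} used in the norm lemma, and that the numerator $E_j(\rmi-\qqev)/(E_j+\rmi)-\rme^{-\rmi\Theta}E_j(-\rmi-\qqev)/(E_j-\rmi)$ simplifies, after clearing the common denominator $1+E_j^2$ and factoring out $(1-\rme^{-\rmi\Theta})$, to the stated expression. The two distinct blocks ($E_j\in I$ versus $E_j\not\in I$) are orthogonal in $L^2(\curlyM)$ because they are supported on disjoint sets of eigenfunctions, so the squared norm is simply the sum of the squared norms of the two blocks, which is what yields the clean additive form. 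The main obstacle, as noted, is confirming that \eqref{eq:qqev} is precisely the condition needed for $\qqef$ to lie in $\dom(H_\Theta)$ and to remove the unwanted in-$I$ pole-type contributions; with that in hand the rest is the algebraic simplification just described, closely paralleling the proof of Lemma~\ref{lem:quasi:norm}.
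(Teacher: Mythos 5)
Your proposal is correct and follows essentially the same route as the paper: decompose $\qqef$ by adding and subtracting $\frac{\sigma}{1-\rme^{\rmi\Theta}}\curlyP_I(g_\rmi-\rme^{\rmi\Theta}g_{-\rmi})$ so that the in-$I$ components are absorbed into a function $\hatqqef$ whose vanishing at $p$ is exactly the content of \eqref{eq:qqev}, while the full (unprojected) Green-function combination supplies $a_\pm$ satisfying $a_-=-\rme^{\rmi\Theta}a_+$; then apply $H_0^\asterisk-\qqev$ blockwise, use \eqref{eq:lem:eins} and \eqref{eq:nice:identity} to simplify the coefficients, and exploit the orthogonality of the in-$I$ and out-of-$I$ sums. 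This matches the paper's proof step for step, including the identification of the $(1-\sigma)$ cancellation on the in-$I$ block and the factor $1+E_j\qqev+\frac{\sin\Theta}{1-\cos\Theta}(E_j-\qqev)$ on the complement.
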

\proof
First of all, let us prove that $\qqef\in\dom(H_\Theta)$.

We can write
\begin{equation}
\qqef = \sum_{E_j\in I} \frac{\overline{\Phi_j(p)}}{E_j-\qqev}\Phi_j -
\frac \sigma{1-\rme^{\rmi\Theta}}\curlyP_I(g_\rmi-\rme^{\rmi\Theta}g_{-\rmi}) +
\frac \sigma{1-\rme^{\rmi\Theta}}(g_\rmi-\rme^{\rmi\Theta}g_{-\rmi}).
\end{equation}
Using \eqref{eq:useful} we can express this as
\begin{equation} \label{eq:quasi:hat_tilde}
\qqef = \hatqqef+\frac \sigma{1-\rme^{\rmi\Theta}}\left( g_\rmi - 
\rme^{\rmi \Theta}g_{-\rmi}\right),
\end{equation}
where 
\begin{equation}
\hatqqef(x)\coloneq \sum_{E_j\in I}\left( \frac1{E_j-\qqev}-
\frac{\sigma E_j}{1+E_j^2} + \frac{\sigma \sin\Theta}{1-\cos\Theta}
\frac1{1+E_j^2}\right)\overline{\Phi_j(p)}\Phi_j(x).
\end{equation}
Now observe that due to the definition \eqref{eq:qqev} of
$\qqev$, $\hatqqef(p)=0$, so $\hatqqef\in\curlyD_p$.
Thus \eqref{eq:quasi:hat_tilde} justifies the 
assertion $\qqef\in\dom(H_\Theta)$.

Since $H_0^*$ is an extension of $H_\Theta$, we have
\begin{equation} \label{eq:nuovo:zero}
  (H_\Theta - \qqev) \qqef = (H_0 - \qqev)\hatqqef + \frac \sigma{1-
\rme^{\rmi\Theta}}\left( (\rmi-\qqev)g_\rmi +\rme^{\rmi\Theta}
(\rmi+\qqev)g_{-\rmi}\right).
\end{equation}
Now,
\begin{align}
  (H_0-\qqev)\hatqqef &= \sum_{E_j\in I} \left( 1 - \frac{\sigma E_j(E_j-\qqev)}
{1+E_j^2} + \frac{\sigma\sin\Theta}{1-\cos\Theta} \frac{E_j-\qqev}{1+E_j^2}
\right)\overline{\Phi_j(p)}\Phi_j \nonumber \\
&= \sum_{E_j\in I} \left( 1 + (1-\sigma)E_j^2 +\sigma E_j\qqev
 + \frac{\sigma\sin\Theta}{1-\cos\Theta} (E_j-\qqev)
\right)\frac{\overline{\Phi_j(p)}}{{1+E_j^2}}\Phi_j. \label{eq:nuovo:eins}
\end{align}
Using (\ref{eq:lem:eins}) we find
\begin{align}
 \left \langle \Phi_j , (\rmi-\qqev)g_\rmi +\rme^{\rmi\Theta}
(\rmi+\qqev)g_{-\rmi} \right\rangle &=
\left( -\frac{\rmi+\qqev}{E_j+\rmi}\Phi_j(p)+\rme^{-\rmi\Theta}\frac{-\rmi
+\qqev}{E_j-\rmi}\Phi_j(p)\right) \\
&=\frac{1-\rme^{-\rmi\Theta}}{1+E_j^2}\left( -(1+E_j\qqev)-\frac{\sin\Theta}
{1-\cos\Theta}(E_j-\qqev)\right)\Phi_j(p), \nonumber
\end{align}
again using \eqref{eq:nice:identity}. This leads to
\begin{equation}
\frac{\sigma}{1-\rme^{\rmi\Theta}}(H_\Theta-\qqev)(g_\rmi - \rme^{\rmi\Theta} 
g_{-\rmi}) = -\sigma\sum_{j=1}^\infty 
\left( 1+E_j\qqev + \frac{\sin\Theta}{1-\cos\Theta}
(E_j-\qqev)\right) \frac{\overline{\Phi_j(p)}}{1+E_j^2}\Phi_j,
\end{equation}
and combining this with \eqref{eq:nuovo:eins}, we get
\begin{align}
  (H_\Theta-\qqev) \qqef % &= \sum_{E_j\in I} \left( (1-C)(1+E_j^2)\right)
%\frac{\overline{\Phi_j(p)}}{1+E_j^2} \Phi_j\nonumber \\ &
%\qquad\qquad -C\sum_{E_j\not\in I}\left( 1+E_j\qqev+\frac{\sin\Theta} %\nonumber
%{1-\cos\Theta}(E_j-\qqev)\right) \frac{\overline{\Phi_j(p)}}{1+E_j^2}\Phi_j
%\nonumber\\
              &= (1-\sigma)\sum_{E_j\in I} \overline{\Phi_j(p)} \Phi_j\nonumber \\ &
\qquad\qquad -\sigma\sum_{E_j\not\in I}\left( 1+E_j\qqev+\frac{\sin\Theta}
{1-\cos\Theta}(E_j-\qqev)\right) \frac{\overline{\Phi_j(p)}}{1+E_j^2}\Phi_j.
\label{eq:nuovo:zwei}
\end{align}
Since the summations in \eqref{eq:nuovo:zwei} are over disjoint
sets it is easy to calculate the norm:
\begin{multline}
  \| (H_\Theta-\qqev)\qqef\|^2  = (1-\sigma)^2 \sum_{E_j\in I} |\Phi_j(p)|^2
 \\+ \sigma^2\sum_{E_j\not\in I} \left( 1+E_j\qqev +
\frac{\sin\Theta}{1-\cos\Theta}
(E_j-\qqev)\right)^2\frac{|\Phi_j(p)|^2}{(1+E_j^2)^2}.
\end{multline}
\finire
\subsubsection{Proof of theorem \ref{thm:quasi_discrep}}
The first part of the theorem follows from lemma \ref{lem:quasi:norm} 
and proposition \ref{prop:quasi_discrep} and the definition of a quasimode.

% % In order to find the discrepancy, we will need the norm of $\qqef$, given
% % by \eqref{eq:nuovo:drei}. Using \eqref{eq:useful} we find that
% % \begin{align*}
% %   \left\| \curlyP_{I^c}(g_\rmi - \rme^{\rmi\theta}g_{-\rmi})\right\|^2
% %   &=|1-\rme^{\rmi\Theta}|^2 \sum_{E_j\not\in I} \frac{|\Phi_j(p)|^2}
% % {(1+E_j^2)^2} \left(E_j - \frac{\sin\Theta}{1-\cos\Theta}\right)^2 \\
% % &=\Ord(1),
% % \end{align*}
% % uniformly in $I$, making use of Weyl's law. Thus we have proved that
% % \begin{equation}
% %   \|\qqef\|^2 = \sum_{E_j\in I} \frac{|\Phi_j(p)|^2}{(E_j-z)^2} + \Ord(1)
% % \end{equation}
% % uniformly in $z$ and $I$ and $C\in[0,1]$.
% \subsubsection{Orthogonality}

% FROM THE INTRO: 
% By suitably chosing $\sigma$ and $I$ we can find quasimodes with different
% properties. Chosing $\sigma=0$ we find that the quasimodes for fixed $I$ and
% different quasi-eigenvales $\qqev$ are orthogonal (proposition 
% \ref{prop:orthog}).

% Although we shall not use any orthogonality properties of our quasimodes,
% it is interesting to study how close a family of our quasimodes are to 
% being orthogonal.
% \begin{proposition}  \label{prop:orthog}
%   Let $\qqev_1\neq \qqev_2$ be two solutions of \eqref{eq:qqev}. Then
%   \begin{equation}
%     \left\langle \qqqef{{\qqev_1}}, \qqqef{{\qqev_2}} \right\rangle =
% \sigma^2 \sum_{E_j\not\in I} \left( E_j+\frac{\sin\Theta}{1-\cos\Theta}\right)^2
% \frac{|\Phi_j(p)|^2}{(1+E_j^2)^2}.
%   \end{equation}
% \end{proposition}
% \proof
For the second part, let $\qqev_1\neq \qqev_2$ be two solutions of 
\eqref{eq:qqev}. We have
\begin{align} \nonumber
    \left\langle \qqqef{{\qqev_1}}, \qqqef{{\qqev_2}} \right\rangle &=
\Big\langle \sum_{E_j\in I} \frac{\Phi_j(p)}{E_j-\qqev_1}\Phi_j,
\sum_{E_j\in I} \frac{\Phi_j(p)}{E_j-\qqev_2}\Phi_j\Big\rangle + \frac{\sigma^2}
{|1-\rme^{\rmi\Theta}|^2}\left\| \curlyP_{I^{\rm c}}(g_\rmi - \rme^{\rmi\Theta}
g_{-\rmi})\right\|^2 \\
&=\sum_{E_j\in I}\frac{|\Phi_j(p)|^2}{(E_j-\qqev_1)(E_j-\qqev_2)} +
\sigma^2 \sum_{E_j\not\in I} \left( E_j - \frac{\sin\Theta}{1-\cos\Theta}\right)^2
\frac{|\Phi_j(p)|^2}{(1+E_j^2)^2},
\end{align} 
using \eqref{eq:parseval}. By elementary algebra,
\begin{equation}
\frac1{(E_j-\qqev_1)(E_j-\qqev_2)} = \frac1{\qqev_1-\qqev_2}\left( 
\frac1{E_j-\qqev_1} - \frac1{E_j-\qqev_2}\right),
\end{equation}
so, since by \eqref{eq:qqev}
\begin{equation}
\sum_{E_j\in I} \frac{|\Phi_j(p)|^2}{E_j-\qqev_1} = 
\sum_{E_j\in I} \frac{|\Phi_j(p)|^2}{E_j-\qqev_2},
\end{equation}
we get
\begin{equation}
 \sum_{E_j\in I}\frac{|\Phi_j(p)|^2}{(E_j-\qqev_1)(E_j-\qqev_2)} = 0.
\end{equation}
\finire

\subsubsection{Controlling the discrepancy of quasimodes}

By tuning the parameter $\sigma$, and choosing the interval $I$ accordingly,
we can find fix quasimodes with particular properties. In the previous
section we have seen that sets of quasimodes with $\sigma=0$ are orthogonal.
We are particularly interested in when the discrepancy is small.
In this subsubsection we prove corollary \ref{cor:arbitrary:small}
that quasimodes with $\sigma=1$ can be made arbitrarily precise, and
that quasimodes with $\sigma=0$ also can have
a simple bound for the discrepancy.

\begin{proof}[Proof of corollary \ref{cor:arbitrary:small}]
Choosing $I=[0,T]$ for $T>E_1$ with $\sigma=1$ gives, by theorem 
\ref{thm:quasi_discrep}, that the discrepancy of $\psi_{1,I}$ satisfies
\begin{equation}\label{eq:arb:disc} d^2 \|\psi_{1,I}\|^2 = 
\sum_{E_j\geq T} 
\left( 1+E_j\qqev +\frac{\sin\Theta}{1-\cos\Theta}
(E_j-\qqev)\right)^2\frac{|\Phi_j(p)|^2}{(1+E_j^2)^2}.\end{equation}
By lemma \ref{lem:quasi:norm} we see that the norm of $\psi_{1,I}$ is 
bounded away from $0$ by a constant, so that the asymptotics for $d$ are given
by the term on the right-hand side of \eqref{eq:arb:disc}. 
Using Weyl's law, we can estimate
\begin{equation}
  \sum_{E_j\geq T} 
\left( 1+E_j\qqev +\frac{\sin\Theta}{1-\cos\Theta}
(E_j-\qqev)\right)^2\frac{|\Phi_j(p)|^2}{(1+E_j^2)^2} \ll \frac{\qqev^2}{T},
\end{equation}
which can be made arbitrarily small by increasing $T$.

For the second part with $\sigma=0$, we have
\begin{equation}
  d^2\| \psi_{0,I}\|^2 = \sum_{E_j\in I} |\Phi_j(p)|^2.
\end{equation}
We observe that splitting the sum in \eqref{eq:qqev} leads to
\begin{equation}
  \label{eq:split:evsum}
  \sum_{\substack{E_j\in I\\E_j>\mu}}\frac{|\Phi_j(p)|^2}{E_j-\mu} =
  \sum_{\substack{E_j\in I\\E_j<\mu}}\frac{|\Phi_j(p)|^2}{\mu-E_j}.
\end{equation}
Denote by $E_+$ and $E_-$ the largest and smallest points of the spectrum
$(E_j)_{j=1}^\infty$ lying in the interval $I$. Then
\begin{equation}
  \sum_{\substack{E_j\in I\\E_j>\mu}} |\Phi_j(p)|^2 \leq
(E_+-\mu)\sum_{\substack{E_j\in I\\E_j>\mu}}\frac{|\Phi_j(p)|^2}{E_j-\mu} 
\end{equation}
and
\begin{equation}
  \sum_{\substack{E_j\in I\\E_j<\mu}} |\Phi_j(p)|^2 \leq
(\mu-E_-)\sum_{\substack{E_j\in I\\E_j<\mu}}\frac{|\Phi_j(p)|^2}{\mu-E_j}. 
\end{equation}
Adding these inequalities, and using \eqref{eq:split:evsum}, we get
\begin{equation}
  \sum_{E_j\in I} |\Phi_j(p)|^2 \leq (E_+-E_-) 
\sum_{\substack{E_j\in I\\E_j>\mu}}\frac{|\Phi_j(p)|^2}{E_j-\mu} =
(E_+-E_-) \sum_{\substack{E_j\in I\\E_j<\mu}}\frac{|\Phi_j(p)|^2}{\mu-E_j}.
\end{equation}
Since $E_+-E_-\leq \ell(I)$ we get
\begin{equation}
  2d^2\|\psi_{0,I}\|^2 \leq \ell(I) \sum_{E_j\in I} \frac{|\Phi_j(p)|^2}
{|E_j-\mu|}.
\end{equation}
Finally,
\begin{equation}
  \ell(I) \sum_{E_j\in I} \frac{|\Phi_j(p)|^2}{|E_j-\mu|} \leq
\ell(I)^2 \sum_{E_j\in I} \frac{|\Phi_j(p)|^2}{(E_j-\mu)^2}
=\ell(I)^2\|\psi_{0,I}\|^2,
\end{equation}
noting that $|E_j-\mu|\leq \ell(I)$ for $\mu\in I$.
% For the second part, with $\sigma=0$, we have
% \begin{equation}
%   d^2\|\psi_{0,I}\|^2 = \sum_{E_j\in I} |\Phi_j(p)|^2 \leq
% \ell(I)^2\sum_{E_j\in I} \frac{|\Phi_j(p)|^2}{(E_j-\mu)^2} =\ell(I)^2
% \|\psi_{0,I}\|^2,
% \end{equation}
% noting that $|E_j-\mu|\leq \ell(I)$ for $\mu\in I$.

We now consider the case with $\sigma=0$, and $I$ containing only the
two levels $E_j$, $E_{j+1}$. We can solve \eqref{eq:qqev} directly to get
\begin{equation}
  \mu = \frac{|\Phi_{j+1}(p)|^2 E_j + |\Phi_j(p)|^2 E_{j+1}}
{|\Phi_{j+1}(p)|^2 + |\Phi_j(p)|^2}. 
\end{equation}
Substituting this value of $\mu$ into the definition of $\psi_{0,I}$ we 
get
\begin{align}
  \psi_{0,I} &= \frac{\overline{\Phi_j(p)}}{E_j-\mu}\Phi_j + 
\frac{\overline{\Phi_{j+1}(p)}}{E_{j+1}-\mu}\Phi_{j+1} \\
&=\frac{|\Phi_{j+1}(p)|^2 + |\Phi_j(p)|^2}{E_{j+1}-E_j}\left( 
\frac{-1}{\Phi_j(p)}\Phi_j + \frac1{\Phi_{j+1}(p)}\Phi_{j+1}\right).
\end{align}
So,
\begin{align}
  \|\psi_{0,I}\|^2 &= d^2\|\psi_{0,I}\|^2\frac{|\Phi_{j+1}(p)|^2 + 
|\Phi_j(p)|^2}{(E_{j+1}-E_j)^2}\left( 
\frac{1}{|\Phi_j(p)|^2}+ \frac1{|\Phi_{j+1}(p)|^2}\right) \\
&\geq \frac{4 d^2 \|\psi_{0,I}\|^2}{\ell(I)^2},
\end{align}
using the fact that
\begin{equation}
  {|\Phi_{j+1}(p)|^2 + |\Phi_j(p)|^2}\left( 
\frac{1}{|\Phi_j(p)|^2}+ \frac1{|\Phi_{j+1}(p)|^2}\right)
=2 + \left|\frac{\Phi_{j+1}(p)}{\Phi_j(p)}\right|^2 + 
 \left|\frac{\Phi_{j}(p)}{\Phi_{j+1}(p)}\right|^2 \geq 4.
\end{equation}
\end{proof}

  The existence of arbitrarily precise quasimodes
 can be used to give a new proof of the often-used 
representation for eigenvalues and eigenfunctions of rank-one perturbations
(see e.g. \cite{alb:wcq, cdv:plI, rah:ssr, shi:wcq})

\begin{theorem} \label{thm:true:modes}
The solutions $\lambda$ to the equation
\begin{equation} \label{eq:seba:eigenvalue}
  \sum_{j=1}^\infty \left( \frac1{E_j-\lambda} - \frac{E_j}{1+E_j^2}\right)
|\Phi_j(p)|^2=\frac{\sin\Theta}{1-\cos\Theta}\sum_{j=1}^\infty 
\frac{|\Phi_j(p)|^2}{1+E_j^2},
\end{equation}
are eigenvalues of $H_\Theta$ with corresponding eigenfunctions given by
\begin{equation} \label{eq:seba:eigenfunction}
  \phi(x)=\sum_{j=1}^\infty \frac{\overline{\Phi_j(p)}}{E_j-\lambda}\Phi_j(x).
\end{equation}
\end{theorem}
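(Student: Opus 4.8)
The plan is to fix a solution $\lambda$ of the secular equation \eqref{eq:seba:eigenvalue}, to define $\phi$ by \eqref{eq:seba:eigenfunction}, and to realise $\phi$ as an $L^2(\curlyM)$-limit of the arbitrarily precise quasimodes produced in corollary \ref{cor:arbitrary:small}; closedness of the self-adjoint operator $H_\Theta$ then upgrades the approximate eigenrelation to an exact one. Note first that any solution $\lambda$ of \eqref{eq:seba:eigenvalue} is distinct from every $E_j$, since the left-hand side of \eqref{eq:seba:eigenvalue} has a pole at each $E_j$; hence $\phi$ is a well-defined, nonzero element of $L^2(\curlyM)$, with $\|\phi\|^2=\sum_j|\Phi_j(p)|^2/(E_j-\lambda)^2$ convergent by Weyl's law \eqref{eq:weyl}.

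For each $T>E_1$ I take $\sigma=1$, $I=[0,T]$, and let $\mu(T)$ be the root of \eqref{eq:qqev} lying in the same spectral gap $(E_k,E_{k+1})$ as $\lambda$, with associated quasimode $\psi_{1,[0,T]}$. Evaluating the right-hand side of \eqref{eq:qqev} by means of \eqref{eq:useful} identifies \eqref{eq:qqev}, for these choices, with the restriction to the indices $E_j\in I$ of the secular equation \eqref{eq:seba:eigenvalue}. Writing $F_T(\mu)$ for its left-hand side minus right-hand side, each summand $|\Phi_j(p)|^2/(E_j-\mu)$ has strictly positive $\mu$-derivative, so $F_T$ is strictly increasing across the gap $(E_k,E_{k+1})$ and has a unique root there; moreover $F_T\to F_\infty$ locally uniformly off the spectrum as $T\to\infty$, where $F_\infty$ is the full secular function of \eqref{eq:seba:eigenvalue}. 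Consequently the root $\mu(T)$ converges to the root $\lambda$ of $F_\infty$ in the same gap, that is, $\mu(T)\to\lambda$.

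Next I show $\psi_{1,[0,T]}\to\phi$ in $L^2(\curlyM)$. Expanding both in the orthonormal basis $\{\Phi_j\}$ and subtracting, the difference splits into a head over $E_j\leq T$, whose $j$-th coefficient is $\bigl(\tfrac1{E_j-\mu(T)}-\tfrac1{E_j-\lambda}\bigr)\overline{\Phi_j(p)}=\tfrac{(\mu(T)-\lambda)\,\overline{\Phi_j(p)}}{(E_j-\mu(T))(E_j-\lambda)}$, and a tail over $E_j>T$. For $T$ large, $\lambda$ and $\mu(T)$ lie a fixed distance $\delta>0$ from the whole spectrum, so $1/|E_j-\mu(T)|^2$ and $1/|E_j-\lambda|^2$ are each $\Ord(1/(1+E_j^2))$ uniformly; hence the squared norm of the head is $\Ord(|\mu(T)-\lambda|^2)$ times the finite, $T$-independent sum $\sum_j|\Phi_j(p)|^2/(1+E_j^2)^2$, and tends to $0$. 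The tail is the tail of a convergent series (each competing coefficient is $\Ord(1/E_j)$ and $\sum_j|\Phi_j(p)|^2/E_j^2$ converges), so it too tends to $0$. Finally, corollary \ref{cor:arbitrary:small} gives $d(T)\ll\mu(T)/\sqrt T\to0$, and since $\|\psi_{1,[0,T]}\|\to\|\phi\|<\infty$ we obtain $\|(H_\Theta-\mu(T))\psi_{1,[0,T]}\|=d(T)\|\psi_{1,[0,T]}\|\to0$.

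From $H_\Theta\psi_{1,[0,T]}=(H_\Theta-\mu(T))\psi_{1,[0,T]}+\mu(T)\psi_{1,[0,T]}$ together with the three limits $\psi_{1,[0,T]}\to\phi$, $\mu(T)\to\lambda$ and $\|(H_\Theta-\mu(T))\psi_{1,[0,T]}\|\to0$, we get $H_\Theta\psi_{1,[0,T]}\to\lambda\phi$. As $H_\Theta$ is self-adjoint it is closed, so a closed operator whose arguments and images both converge sends the limit into its domain; thus $\phi\in\dom(H_\Theta)$ and $H_\Theta\phi=\lambda\phi$, which is the assertion. I expect the main obstacle to be the analytic bookkeeping of the middle two paragraphs: verifying carefully that \eqref{eq:qqev} reduces in the limit exactly to \eqref{eq:seba:eigenvalue}, so that the tracked root satisfies $\mu(T)\to\lambda$, and ensuring the denominators $E_j-\mu(T)$ and $E_j-\lambda$ stay uniformly bounded away from $0$ as $T$ grows, which is precisely what promotes coefficientwise convergence to convergence in $L^2(\curlyM)$. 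The concluding closedness step is then soft.
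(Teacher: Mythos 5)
Your proposal is correct and is exactly the proof the paper has in mind: the paper gives no written proof of theorem \ref{thm:true:modes}, only the preceding remark that the arbitrarily precise $\sigma=1$, $I=[0,T]$ quasimodes of corollary \ref{cor:arbitrary:small} yield one, and your argument (root-tracking $\mu(T)\to\lambda$ across a fixed spectral gap, $L^2$-convergence $\psi_{1,[0,T]}\to\phi$ via the head/tail split, vanishing discrepancy, and closedness of the self-adjoint $H_\Theta$) supplies precisely those details. The only blemish is not yours: evaluating the right-hand side of \eqref{eq:qqev} via \eqref{eq:useful} produces the partial sums of \eqref{eq:seba:eigenvalue} with $-\sin\Theta/(1-\cos\Theta)$ rather than $+\sin\Theta/(1-\cos\Theta)$ as the constant, so the sign of that term in the statement of \eqref{eq:seba:eigenvalue} appears to be a typo in the paper (immaterial for the later application at $\Theta=\pi$) rather than a gap in your identification.
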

\noindent
Note that the left-hand side of \eqref{eq:seba:eigenvalue} converges pointwise,
and \eqref{eq:seba:eigenfunction} converges in $L^2(\Omega)$.

By analysing the resolvent, it is possible to extend theorem 
\ref{thm:true:modes} to get the following \cite[Theoreme 2]{cdv:plI},
\begin{theorem} \label{thm:no:other}
  Apart from the solutions to \eqref{eq:seba:eigenvalue}, there are no
other points of the spectrum of $H_\Theta$ in any of
the intervals $(E_M,E_{M+1})$.
\end{theorem}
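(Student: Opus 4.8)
The plan is to analyse the resolvent of $H_\Theta$ directly by means of a Krein-type formula, writing $(H_\Theta-z)^{-1}$ as a rank-one modification of the free resolvent $(-\Delta-z)^{-1}$, and then reading off the spectrum in a gap $(E_M,E_{M+1})$ from the zeros of the scalar coefficient appearing in that modification. Since $\curlyM$ is compact, $H_\Theta$ has compact resolvent and hence purely discrete spectrum, so ``points of the spectrum'' here means eigenvalues; nevertheless the resolvent estimate below excludes spectrum of every type at once. The key observation is that for $z$ in the \emph{open} interval $(E_M,E_{M+1})$ there is no $E_j$, so the free resolvent $(-\Delta-z)^{-1}$ is a bounded operator and the Green function $G_z\coloneq\sum_j \overline{\Phi_j(p)}(E_j-z)^{-1}\Phi_j$ is a well-defined element of $L^2(\curlyM)$, both depending analytically on $z$ throughout the gap (their only singularities occur at the $E_j$).

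First I would identify $G_z$ as the deficiency element at energy $z$: it lies in $\dom(H_0^*)$ and satisfies $H_0^* G_z=zG_z$. This is exactly the computation already used in Proposition~\ref{prop:quasi_discrep}; decomposing $G_z=\hat G_z+a_+g_\rmi+a_-g_{-\rmi}$ with $a_++a_-=1$ and $\hat G_z\in\curlyD_p$, the eigenrelation follows from \eqref{eq:lem:eins}. Since $-\Delta$ is itself a self-adjoint extension of $H_0$, and $(-\Delta-z)^{-1}f\in\dom(\Delta)\subset\dom(H_0^*)$ with $H_0^*$ acting there as $-\Delta$, the ansatz $\psi=(-\Delta-z)^{-1}f+cG_z$ satisfies $(H_0^*-z)\psi=f$ for \emph{every} scalar $c$. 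It therefore remains only to fix $c$ so that $\psi\in\dom(H_\Theta)$, i.e.\ so that the von Neumann boundary condition $a_-(\psi)=-\rme^{\rmi\Theta}a_+(\psi)$ holds. As $a_\pm$ are linear and $(-\Delta-z)^{-1}f$ already obeys the boundary condition of the free extension, this is a single linear equation for $c$, whose solution has the form $c=\langle f,G_{\bar z}\rangle/Q_\Theta(z)$ for a scalar $Q_\Theta(z)$ assembled from $a_\pm(G_z)$ and $\rme^{\rmi\Theta}$. This produces the resolvent formula
\[
  (H_\Theta-z)^{-1}f=(-\Delta-z)^{-1}f+\frac{\langle f,G_{\bar z}\rangle}{Q_\Theta(z)}\,G_z .
\]

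Next I would compute $Q_\Theta(z)$ explicitly. Evaluating the boundary coefficients $a_\pm(G_z)$ and simplifying by means of \eqref{eq:useful} and the identity \eqref{eq:nice:identity} --- the same manipulations that drive Proposition~\ref{prop:quasi_discrep} --- should show that, up to a nonvanishing analytic prefactor, $Q_\Theta(z)$ equals the difference of the two sides of \eqref{eq:seba:eigenvalue}. In particular $Q_\Theta$ is real-analytic on the gap and its zeros there are precisely the solutions of \eqref{eq:seba:eigenvalue}. For any $z\in(E_M,E_{M+1})$ that is not such a solution we then have $Q_\Theta(z)\neq0$, and since $(-\Delta-z)^{-1}$, $G_z$, $G_{\bar z}$ and $1/Q_\Theta(z)$ are all bounded there, the right-hand side of the formula is a bounded operator which one checks (by analytic continuation from $\Im z\neq0$) is a genuine two-sided inverse of $H_\Theta-z$. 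Hence such $z$ lie in the resolvent set, so the spectrum of $H_\Theta$ in $(E_M,E_{M+1})$ is contained in the zero set of $Q_\Theta$, which is the assertion; together with Theorem~\ref{thm:true:modes} the inclusion is in fact an equality.

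The step I expect to be the main obstacle is the bookkeeping in the construction of $c$ and $Q_\Theta$. The function $G_z$ is genuinely singular at $p$ (it belongs to $\dom(H_0^*)$ but not to $\dom(\Delta)$), so its boundary coefficients $a_\pm(G_z)$ and the functionals evaluated on $(-\Delta-z)^{-1}f$ must be extracted through the regularised $g_{\pm\rmi}$ decomposition rather than by any naive pointwise evaluation at $p$. Equally delicate is justifying that the identity derived for $\Im z\neq0$, where both resolvents are a priori bounded, continues analytically to real $z$ inside the gap; here it is essential that $(E_M,E_{M+1})$ contains no $E_j$, so that no true singularity of $(-\Delta-z)^{-1}$ or of $G_z$ intrudes and the only possible obstruction to invertibility is a zero of $Q_\Theta$.
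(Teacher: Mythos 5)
The paper offers no proof of this theorem: it is quoted from Colin de Verdi\`ere \cite[Theoreme 2]{cdv:plI} with only the remark that it follows ``by analysing the resolvent''. Your proposal is therefore not being measured against an argument in the text; it is a reconstruction of the standard proof, and it is essentially sound. The Krein-type formula with the scalar $Q_\Theta(z)$ assembled from the boundary coefficients $a_\pm(G_z)$ is exactly the mechanism behind the cited result, and the computation you defer --- showing that $Q_\Theta(z)$ vanishes precisely at solutions of \eqref{eq:seba:eigenvalue} --- is routine and structurally identical to the manipulations in proposition \ref{prop:quasi_discrep}; the only thing to check carefully there is the overall sign, which depends on how the deficiency elements are labelled, and the normalisation $a_+(G_z)+a_-(G_z)=1$, which is forced by cancelling the $1/E_j$ tail so that $\hat G_z\in\dom(\Delta)$ (using Weyl's law). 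Two simplifications are worth noting. First, you do not need analytic continuation from $\Im z\neq 0$: once $Q_\Theta(z)\neq 0$ your formula exhibits $H_\Theta-z$ as surjective, and for a self-adjoint operator surjectivity at a real point already implies $z$ is in the resolvent set. Second, one can bypass the resolvent formula altogether: for real $z\in(E_M,E_{M+1})$ one has $\range(H_0-z)=\{G_z\}^\perp$, hence $\ker(H_0^\asterisk-z)=\spam\{G_z\}$ is one-dimensional, so (the spectrum being discrete) any eigenfunction of $H_\Theta$ with eigenvalue in the gap must be a multiple of $G_z$, and imposing $a_-=-\rme^{\rmi\Theta}a_+$ on $G_z$ is precisely \eqref{eq:seba:eigenvalue}. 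Finally, observe that the statement implicitly excludes the unperturbed extension ($\rme^{\rmi\Theta}=1$), for which both your linear equation for $c$ and the quantity $\sin\Theta/(1-\cos\Theta)$ degenerate; this is consistent with the paper's standing conventions.
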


\section{Localisation results} \label{sec:vier}

In this section we will consider the extent to which eigenfunctions
of $H_\Theta$ can be approximated by quasimodes. In particular we will
focus on the quasimodes with $\sigma=0$. First we shall prove proposition
\ref{prop:sqrt3}, which is straightforward. Then we shall show that
strengthening the assumptions made on the spectrum of $\Delta$ leads to
a proof of full convergence.

% \begin{lemma} \label{lem:discrepancy}
%   Let $\sigma=0$ and $\mu\in I$. Then the discrepancy of $\psi_{0,I}$ is
% at most $\ell(I)$, the length of $I$.
% \end{lemma}
% \begin{proof}
% This follows directly from theorem \ref{thm:quasi_discrep}, noting that
% $|E_j-\mu|\leq \ell(I)$ for all $E_j\in I$.
% %   By proposition \ref{prop:quasi_discrep} we have
% %   \begin{equation*}
% %     \|(H_\pi -\qqev)\psi_{0,I}\|^2 = \sum_{E_j\in I}
% % {|\Phi_j(p)|^2}.
% %   \end{equation*}
% % According to lemma \ref{lem:quasi:norm},
% %   \begin{align*}
% %     \|\psi_{0,I}\|^2 &= \sum_{E_j\in I}
% % \frac{|\Phi_j(p)|^2}{(E_j-\qqev)^2}\\ 
% % &\geq \frac1{\ell(I)^2}\sum_{E_j\in I} |\Phi_j(p)|^2.
% %   \end{align*}
% % So
% % $$
% % \sum_{E_j\in I}|\Phi_j(p)|^2 \leq \ell(I)^2\| \psi_{0,I}\|^2.$$
% \end{proof}

\begin{proof}[Proof of proposition \ref{prop:sqrt3}]
The length of the interval $I$ is $\ell(I)=\Ec-\Eb$. 
%Choose $\mu$ to satisfy \eqref{eq:} with $\mu\in I$.
Let $M=\min\{\Ed-\mu,\mu-\Ea\}\geq \min\{\Ed-\Ec,\Eb-\Ea\}$.
By applying (\ref{eq:quasi:proj}) with this $M$ we get, 
\begin{equation} \label{eq:norm:squared}
  \sum_{\lambda_j\in[\Ea,\Ed]} |\langle\psi_{0,I},\phi_j\rangle|^2 \geq
\|\psi_{0,I}\|^2\left( 1-\frac{\ell(I)^2}{4\min\{\Ed-\Ec,\Eb-\Ea\}^2}\right).
\end{equation}
From theorem \ref{thm:no:other} there are only three eigenvalues of
$H_\Theta$ in the interval $[\Ea,\Ed]$.
It therefore follows that for at least one of these three eigenfunctions its
inner-product squared  with $\psi_{0,I}$ is at least
$\frac13$ of the right-hand side of \eqref{eq:norm:squared}.
\end{proof}

We now consider how to improve proposition \ref{prop:sqrt3} at the
expense of making further assumptions about the spectrum of $-\Delta$.
For simplicity we will focus henceforth on the choice of parameter 
$\Theta=\pi$.

\begin{figure}[htbp]
\begin{center}
\setlength{\unitlength}{5cm}
\begin{picture}(3,0.8)
\put(0,0.1){\includegraphics[angle=0,width=15.0cm,height=2.2cm]{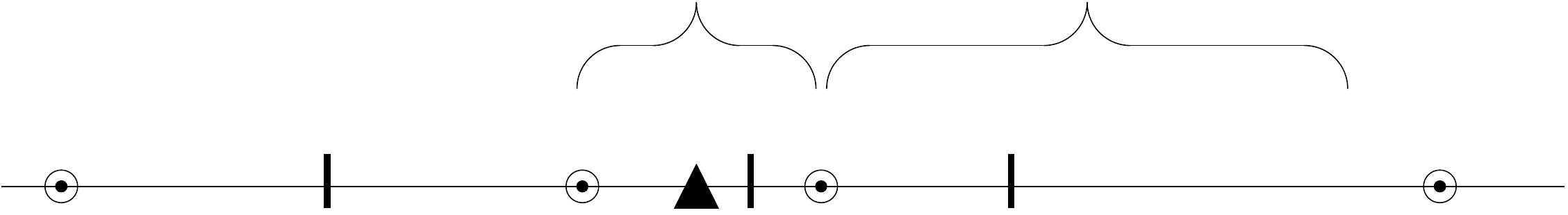}}
\put(1.08,0.04){$\Eb$}
\put(1.3,0.59){$\epsilon$}
\put(2.73,0.04){$\Ed$}
\put(1.53,0.04){$\Ec$}
\put(2.06,0.59){$\delta$}
\put(1.31,0.00){$\qqev$}
\put(0.08,0.04){$\Ea$}
\put(1.42,0.26){$\lambda$}
\put(1.91,0.26){$\lambda^*$}
\end{picture}
\caption{Part of the spectrum of $-\Delta$ and $H_\pi$. Vertical bars
denote eigenvalues of $H_\pi$, and circles denote eigenvalues of $-\Delta$. 
The triangle $\qqev$ is a quasi-eigenvalue approximating
$\lambda$. See main text for further explanation.}
\label{fig:zero}
\end{center}
\end{figure}

In figure \ref{fig:zero} a cartoon of part of the spectrum of $H_\pi$ and
$-\Delta$ is displayed. Highlighted are four consecutive eigenvalues of
$-\Delta$, labelled $\Ea, \Eb, \Ec$ and $\Ed$, chosen so that $\Ec-\Eb
\leq \epsilon$. (The positions of all points depend on $\epsilon$.)

Between $\Eb$ and $\Ec$ is an eigenvalue, $\lambda$, of $H_\pi$. 

We find a quasimode $\psi_{0,I}$ associated to the interval 
$I=[\Eb,\Ec]$
%$I$ with $[\Eb,\Ec]\subsetneqq I \subsetneqq [\Ea,\Ed]$ 
with quasi-eigenvalue $\qqev$ 
approximating $\lambda$. By corollary \ref{cor:arbitrary:small}
the discrepancy of this
quasimode is no greater than $\epsilon/2$. Between $\Ec$ and $\Ed$ is 
another eigenvalue $\lambda^*$ of $H_\pi$. 
In order to be able to apply \eqref{eq:quasi:approx},
we need to be sure that $\lambda^*$ is not too close to $\Ec$. 
An argument to show that this is the case is given below.

The eigenvalue between $\Ea$ and $\Eb$ can be handled with a similar
method.

We shall make the following assumption on the spectral sequence of $\Delta$:

\begin{assumption} For some $0<q<1/2$ and $1<\rho < 2(1-q)$,
  there exists a sequence $(\epsilon_n)_{n=1}^\infty$, $\epsilon_n\downarrow
0$ such that for each $n$ there are four consecutive eigenvalues,
$\Ea(n)<\Eb(n)<\Ec(n)<\Ed(n)\ll
\epsilon_n^{-\rho}$, satisfying   
\begin{align} \nonumber
  \Ec-\Eb &\ll \epsilon_n \\
  \Ed-\Ec &\gg \epsilon_n^q \\
  \Eb-\Ea &\gg \epsilon_n^q. \nonumber
\end{align} as $n\to\infty$.
\label{ass:main}
\end{assumption}
 
Assumption \ref{ass:main} asserts that the positions of eigenvalues
of $-\Delta$ occur with the spacings as described above, and furthermore
that this does not happen too high up in the spectrum. This upper bound
is necessary
as a consequence of the non-uniform convergence in $\lambda$ of the series
in \eqref{eq:seba:eigenvalue}. In appendix B we show that assumption
\ref{ass:main} is satisfied almost surely if the sequence $(E_j)$ comes
from a Poisson process. In this sense, assumption \ref{ass:main} is
consistent with the Berry-Tabor conjecture, if $-\Delta$ is the Hamiltonian
corresponding to an integrable dynamical system.

We shall also assume a lower bound for the absolute values of the
eigenfunctions $\Phi_j$ at the point $p$.
\begin{assumption} \label{ass:extra}
  There exists a constant $c_0>0$ independent of $j$ such that 
\begin{equation}
|\Phi_j(p)|\geq c_0.
\end{equation}
\end{assumption}
\begin{remark}
  In fact we require only that assumption \ref{ass:extra} holds for
(possibly a subsequence of) the sequence of pairs $\Phi_{\rm b}(p)$
and $\Phi_{\rm c}(p)$ for eigenfunctions associated with the sequences of energy
levels $\Eb$ and $\Ec$ defined in assumption \ref{ass:main}.
\end{remark}
We recall that the spectral sequence is defined in such a way that
$\Phi_j(p)\neq 0$ for all $j$. Thus assumption \ref{ass:extra} disqualifies
subsequences of eigenfunctions converging to $0$ at the point $p$. 

This assumption reflects the fact that if $|\Phi_j(p)|$ becomes small,
two eigenvalues of $H_\pi$ will approach $E_j$. Then we would only be
able to prove that the quasi-eigenfunction approximates a certain
linear combination of these eigenfunctions of $-\Delta$, rather
than an actual eigenfunction. Assumption \ref{ass:extra} can be
relaxed slightly---see remark \ref{rem:extra} below.

\begin{theorem} \label{thm:seba}
Assume that the spectrum of $-\Delta$ satisfies assumptions \ref{ass:main} and
\ref{ass:extra}. Then
the sequence of quasimodes $\psi_{0,I}$ associated
to the sequence of intervals $I=[\Eb,\Ec]$ and $\mu\in I$, with
% $[\Eb,\Ec]\subsetneqq I \subsetneqq [\Ea,\Ed]$, and 
$\Ea,\ldots,\Ed$ as described in assumption \ref{ass:main},
after normalisation, converge in $L^2$ to a subsequence of true eigenfunctions
of $H_\pi$.
\end{theorem}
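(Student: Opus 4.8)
The plan is to derive the conclusion from the perturbative estimate \eqref{eq:quasi:approx}. Once we know that the normalised quasimode $\psi_{0,I}/\|\psi_{0,I}\|$ has small discrepancy and that the window $[\qqev-M,\qqev+M]$ contains exactly one eigenvalue of $H_\pi$, with $M$ large compared with that discrepancy, \eqref{eq:quasi:approx} forces the quasimode to be $L^2$-close, up to a phase, to the associated true eigenfunction. The discrepancy is already controlled: since $\Eb,\Ec$ are consecutive, $I=[\Eb,\Ec]$ contains precisely the two eigenvalues $\Eb,\Ec$, so the last part of corollary \ref{cor:arbitrary:small} gives $d\leq\tfrac12\ell(I)=\tfrac12(\Ec-\Eb)\ll\epsilon_n$. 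Everything therefore reduces to producing an admissible half-width $M\gg d$, that is, to isolating a single eigenvalue of $H_\pi$ near $\qqev$.

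First I would locate the eigenvalues of $H_\pi$ neighbouring $I$. With $\Theta=\pi$ the factor $\sin\Theta/(1-\cos\Theta)$ vanishes, so by theorems \ref{thm:true:modes} and \ref{thm:no:other} the eigenvalues of $H_\pi$ are exactly the zeros of $F(\lambda):=\sum_{j}\bigl(\tfrac{1}{E_j-\lambda}-\tfrac{E_j}{1+E_j^2}\bigr)|\Phi_j(p)|^2$. On each gap $(E_m,E_{m+1})$ the function $F$ is continuous and strictly increasing from $-\infty$ to $+\infty$, hence has a unique zero there; in particular there is a single eigenvalue $\lambda\in(\Eb,\Ec)$, with neighbours $\lambda^{*}\in(\Ec,\Ed)$ and $\lambda^{**}\in(\Ea,\Eb)$. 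Because $\qqev\in I$ and $\ell(I)\ll\epsilon_n$, the window automatically captures $\lambda$; it then suffices to bound the distances to the neighbours from below --- it is enough that $\lambda^{*}-\Ec\gg\epsilon_n$ and $\Eb-\lambda^{**}\gg\epsilon_n$ --- and to take $M=\tfrac12\min\{\lambda^{*}-\Ec,\,\Eb-\lambda^{**}\}$.

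The heart of the matter is the lower bound on $\lambda^{*}-\Ec$, the estimate for $\Eb-\lambda^{**}$ being symmetric with $\Ea$ in the role of $\Ed$. Since $F$ is increasing it suffices to find $s_0$ with $F(\Ec+s_0)<0$, for then $\lambda^{*}>\Ec+s_0$. Near $\Ec$ both poles $\Eb,\Ec$ contribute negatively, and assumption \ref{ass:extra} makes their strengths $|\Phi_b(p)|^2,|\Phi_c(p)|^2\geq c_0^2$, so the pole at $\Ec$ alone is at least $c_0^2/s_0$ in magnitude. I would then split $F$ into the four nearby poles $\Ea,\Eb,\Ec,\Ed$, handled explicitly using assumptions \ref{ass:main}--\ref{ass:extra} and the gaps $\Ed-\Ec,\Eb-\Ea\gg\epsilon_n^q$, and the remaining tail $\sum_{E_j\notin\{\Ea,\Eb,\Ec,\Ed\}}$, estimated by writing it as a Riemann--Stieltjes integral against $\rmd N(t)$ and integrating by parts as in lemma \ref{lem:quasi:norm}. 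All tail poles lie at distance $\gg\epsilon_n^q$ from $\Ec$, so Weyl's law \eqref{eq:weyl}, together with the upper bound $\Ed\ll\epsilon_n^{-\rho}$, bounds this background by a constant times $\epsilon_n^{-\rho/2-q}$. Balancing $c_0^2/s_0$ against this gives $F(\Ec+s_0)<0$ for $s_0\asymp\epsilon_n^{\rho/2+q}$, whence $\lambda^{*}-\Ec\gg\epsilon_n^{\rho/2+q}$. Here the hypotheses enter sharply: $q<1/2$ is exactly what makes the range $1<\rho<2(1-q)$ nonempty, and $\rho<2(1-q)$ is exactly $\rho/2+q<1$, so that $M\gg\epsilon_n^{\rho/2+q}\gg\epsilon_n\gg d$.

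With such an $M$ the interval $[\qqev-M,\qqev+M]$ contains the single eigenvalue $\lambda$, whose eigenfunction $\phi$ is given by \eqref{eq:seba:eigenfunction}, and \eqref{eq:quasi:approx} yields a phase $\chi_n$ with $\bigl\|\phi-\rme^{\rmi\chi_n}\psi_{0,I}/\|\psi_{0,I}\|\bigr\|\leq 2d/M\ll\epsilon_n^{1-\rho/2-q}\to0$. Letting $n\to\infty$ along the sequence $\epsilon_n\downarrow0$ of assumption \ref{ass:main} produces the asserted subsequence of eigenfunctions of $H_\pi$ towards which the normalised quasimodes converge in $L^2$. I expect the genuine obstacle to be the tail estimate of the third step: the series defining $F$ converges non-uniformly in $\lambda$, so the background must be controlled on a window of width $\sim\epsilon_n^q$ that shrinks as $\lambda$ itself grows like $\epsilon_n^{-\rho}$, and it is precisely this tension --- not the quasimode machinery --- that dictates the quantitative content of assumption \ref{ass:main}.
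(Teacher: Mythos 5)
Your proposal is correct and follows essentially the same route as the paper's proof: both reduce the theorem to the lower bound $\lambda^*-\Ec\gg\epsilon_n^{q+\rho/2}$ (and its mirror image below $\Eb$), obtained by balancing the pole of strength $|\Phi_{\rm c}(p)|^2\geq c_0^2$ at $\Ec$ against a tail controlled by partial summation and Weyl's law, with $q+\rho/2<1$ then letting \eqref{eq:quasi:approx} close the argument. The only cosmetic difference is that you test the sign of the spectral function at $\Ec+s_0$ and invoke its monotonicity on the gap, whereas the paper manipulates the identity $\sum_j g(E_j)|\Phi_j(p)|^2=0$ at $\lambda=\lambda^*$ directly.
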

Let us fix a point $n$ of the sequence $(\epsilon_n)$ with 
$\epsilon_n=\epsilon$, and $I$ fixed as described in the statement of theorem
\ref{thm:seba}.

\begin{proof}[Proof of theorem \ref{thm:seba}] 
In order to use \eqref{eq:quasi:approx} we will employ partial
summation, to estimate the position of eigenvalues of $H_\pi$.
If $g$ is a smooth function, then
\begin{equation}
  \label{eq:sum:parts}
  \sum_{X\leq E_j \leq Y} g(E_j)|\Phi_j(p)|^2 = g(Y)N(Y) - g(X)N(X) - \int_X^Y
g'(t)N(t)\,\rmd t,
\end{equation}
where $N(t)$ has been defined in \eqref{eq:weyl}. Equation \eqref{eq:sum:parts}
may be proved by Riemann-Stieltjes integration. Let $\lambda^\asterisk$ 
be the solution
of \eqref{eq:seba:eigenvalue} lying between $\Ec$ and $\Ed$. Let
\begin{equation} \label{eq:g:def}
  g(t)\coloneq \frac1{t-\lambda^\asterisk} - \frac{t}{1+t^2}=
\frac{1+t\lambda^\asterisk}{(t-\lambda^\asterisk)(1+t^2)},
\end{equation}
and observe that $g(t)>0$ if $t>\lambda^\ast$ and $g(t)<0$ if 
$t<\lambda^\asterisk$. By \eqref{eq:seba:eigenvalue} we have
\begin{equation}
  \label{eq:lambdastar}
  0=\sum_{j=1}^\infty g(E_j)|\Phi_j(p)|^2 \leq g(\Ec)|\Phi_{\rm c}(p)|^2
+\sum_{E_j\geq \Ed}g(E_j)|\Phi_j(p)|^2.
\end{equation}
Now, by \eqref{eq:sum:parts},
\begin{align}
  \sum_{E_j\geq \Ed} g(E_j)|\Phi_j(p)|^2 &= -g(\Ed)N(\Ed)-\int_{\Ed}^\infty
g'(t)N(t)\,\rmd t \\
&=\frac1{4\pi}\int_{\Ed}^\infty g(t)\,\rmd t 
+\Ord\!\left( g(\Ed)\Ed^{1/2} + \int_{\Ed}^\infty |g'(t)|t^{1/2} \,
\rmd t\right), \nonumber
\end{align}
using \eqref{eq:weyl}.

Since
\begin{equation}
  g'(t)=\frac{-1}{(t-\lambda^*)^2} - \frac1{1+t^2} + \frac{2t^2}{(1+t^2)^2},
\end{equation}
we get 
\begin{align}
  \int_{\Ed}^\infty |g'(t)|t^{1/2} \,\rmd t &\sim
 \int_{\Ed}^\infty \frac{t^{1/2}}{(t-\lambda^*)^2}\,\rmd t \nonumber \\
&\leq \left( \frac{\Ed}{\Ed-\lambda^*} \right)^{1/2}
\int_{\Ed}^\infty \frac1{(t-\lambda^*)^{{3/2}}}\,\rmd t \nonumber \\
&\ll  \left( \frac{\Ed}{\Ed-\lambda^*} \right)^{1/2} \frac1{(\Ed -
\lambda^*)^{{1/2}}} \nonumber \\
&=\frac{\Ed^{1/2}}{\Ed-\lambda^*}.
\end{align}
We can also calculate
\begin{equation}
  \int_{\Ed}^\infty g(t)\,\rmd t =   \int_{\Ed}^\infty
\frac1{t-\lambda^*} - \frac{t}{1+t^2} \,\rmd t =
 -\ln\!\left(\frac{\Ed-\lambda^*}{\sqrt{1+\Ed^2}}\right).
\end{equation}
So we have
\begin{equation}
  \sum_{E_j\geq\Ed} g(E_j)|\Phi_j(p)|^2 = \frac{-1}{4\pi}
\ln\!\left(\frac{\Ed-\lambda^*}
{\sqrt{1+\Ed^2}}\right) + \Ord\!\left( \frac{\Ed^{1/2}}{\Ed-\lambda^*}
\right),
\end{equation}
in which the dominant term on the RHS is actually the error term.
We have, from \eqref{eq:lambdastar}% \eqref{eq:g:def},
\begin{align}
  g(\Ec)|\Phi_{\rm c}(p)|^2 &\ll \frac{\Ed^{1/2}}{\Ed-\lambda^*} \\
\Rightarrow\qquad \frac{|\Phi_{\rm c}(p)|^2}{\Ec-\lambda^\asterisk} &\ll
 \frac{\epsilon^{-\rho/2}}{\Ed-\Ec-(\lambda^*-\Ec)} \nonumber \\
&\leq \frac{\epsilon^{-\rho/2}}{\epsilon^q - (\lambda^*-\Ec)},
\label{eq:seba:quattro}
\end{align}
implying the lower bound 
\begin{equation}
  \lambda^\asterisk - \Ec \gg \epsilon^{\rho/2+q}.
\end{equation}
To see this, observe that if $\lambda^\asterisk = \littleo(\epsilon^{q+\rho/2})$
then we would have from \eqref{eq:seba:quattro}
\begin{equation}
\frac{|\Phi_{\rm c}(p)|^2}{\Ec-\lambda^\asterisk} \ll \epsilon^{-\rho/2-q},
\end{equation}
a contradiction.

By the same method, we can establish the same bound for the solution
to \eqref{eq:seba:eigenvalue} between $\Ea$ and $\Eb$, and by theorem
\ref{thm:no:other} we deduce that there is an interval of size 
$M\asymp \epsilon^{q+\rho/2}$ about $\qqev$ such that $[\qqev-M,\qqev+M]$
contains only one eigenvalue of $H_\pi$. Since $q+\rho/2<1$, and
since the discrepancy of
$\psi_{0,I}$ is $\Ord(\epsilon)$, 
equation \eqref{eq:quasi:approx} allows us to conclude that 
the normalised quasimode differs from the true eigenfunction
associated to $\qqev$ by an amount which converges to $0$ as $\epsilon\to 0$.
\end{proof}
\begin{remark} \label{rem:extra}
  From the proof of theorem \ref{thm:seba} we see that we can relax
assumption \ref{ass:extra} to demanding only that $|\Phi_j(p)|\gg 
\epsilon^{r/2}$ with $0<r<1-q-\rho/2$.
However, in a generic situation this is unlikely to be achieved. In
appendix \ref{app:A} we show that for a badly-approximable position of
the point $p$ in a rectangle, the best possible bound is 
\begin{equation}
|\Phi_j(p)|\gg\frac1{E_j},
\end{equation}
which is not sufficiently slow.
\end{remark}

\section{Application to rectangular \v{S}eba billiards} \label{sec:fuenf}
In this section we will apply theorem \ref{thm:seba} to the original
\v{S}eba billiard \cite{seb:wcs}.
We consider a rectangular billiard $\Omega=(0,2a)\times(0,2b)\subseteq\R^2$ and
point $p=(a,b)$ at the centre of the billiard. However, we remark that
we could position $p$ at any point with co-ordinates that are rational
multiples of the side lengths without significant changes to the forthcoming
analysis. 

The eigenvalues of 
$-\Delta$, the Laplacian with Dirichlet boundary conditions, are
given by
\begin{equation}
E_{n,m}=\frac{\pi^2}4\left(\frac{n^2}{a^2}+\frac{m^2}{b^2}\right),
\end{equation}
where $n,m\in\N$, and the corresponding eigenfunctions are
\begin{equation}
\Phi_{n,m}(x,y) = \frac1{\sqrt{ab}}\sin\!\left(\frac{n\pi x}{2a}\right)
\sin\!\left(\frac{m\pi y}{2b}\right).
\end{equation}
If either $n$ or $m$ are even, then the symmetry of the problem forces
$\Phi_{n,m}(p)=0$. So for these values of $n$ and $m$, 
$\Phi_{n,m}\in\curlyD_p$,
and are automatically eigenfunctions of the extended operator $H_\pi$. 
We exclude these eigenvalues from the spectrum, as discussed in section
\ref{sec:zwei}.

Instead, we concentrate on the more interesting subsequence where
$n$ and $m$ are both odd, e.g. $n=2s+1$ and $m=2t+1$ with $s,t=0,1,2,\ldots$
Then we have 
\begin{equation}
\Phi_{s,t}(p)=\frac1{\sqrt{ab}}(-1)^{s+t},
\end{equation}
so that along this sequence assumption \ref{ass:extra} is satisfied.
The corresponding set of eigenvalues is given by
\begin{equation} \label{eq:ex:eigenvals}
  E_{s,t}=\pi^2\left(\frac{(s+\frac12)^2}{a^2}+
\frac{(t+\frac12)^2}{b^2}\right),\qquad\mbox{$s,t=0,1,2,\ldots$}
\end{equation}
For generic choices of $a$ and $b$ it is conjectured that the set of
values given by \eqref{eq:ex:eigenvals} behave statistically like the
event times of a Poisson process \cite{ber:lcr,sar:vai,esk:qfs,mar:sff}. 
Under assumption
\ref{ass:main} for the set of values \eqref{eq:ex:eigenvals}, theorem
\ref{thm:seba} asserts the existence of a subsequence $(j_n)\subseteq\N$
such that
\begin{equation}
\| \phi_{j_n} - \psi_n\| \to 0\qquad\mbox{as $n\to\infty$},
\end{equation}
where $\phi_{j}$ are eigenfunctions of $H_\pi$, and $\psi_n$ are
of the form
\begin{equation}
\psi_n = \frac1{\sqrt{2}}\left( \Phi_{j_{n}} +
(-1)^{\beta_n}\Phi_{j_n+1}\right),
\end{equation}
where $\beta_n$ can be $0$ or $1$ and depends on the relative
signs of $\Phi_{j_n}(p)$ and $\Phi_{j_n+1}(p)$.
So the subsequence $(\phi_{j_n})$ converges to a superposition of
two consecutive unperturbed eigenfunctions of $-\Delta$.

The consequences for this subsequence are most striking when one
considers the momentum representation. This is
is given by the Fourier transform;
\begin{equation}
  \label{eq:venticinque}
  \hat{\phi}_j(p_x,p_y) = \frac1{2\pi}\int_{-\infty}^\infty \!
\int_{-\infty}^{\infty} \rme^{-\rmi x p_x - \rmi y p_y} \phi_j (x,y)\,
\rmd x\rmd y.
\end{equation}
For an ergodic system, the quantum ergodicity theorem of \v{S}nirel'man,
Zelditch and Colin de Verdi\`{e}re \cite{sch:epe, zel:ude, cdv:eef}
would imply that the momentum representation of almost all eigenfunctions
equi-distributes around the circle of radius $\sqrt{E_j}$ 
as $j\to\infty$;
\begin{equation}
  \label{eq:ventisei}
  |\hat{\phi}_j(x)|^2 \to \frac1\pi\delta(|x|^2 - E_j)
\qquad\mbox{as $j\to\infty$,}
\end{equation}
where convergence in \eqref{eq:ventisei} is in the weak sense.
\v{S}eba billiards are not ergodic, but we see a very different behaviour
to \eqref{eq:ventisei} for the subsequence $(\phi_{j_n})$.

From Parseval's theorem, it follows that
\begin{equation}
  \label{eq:ventisette}
  \hat{\phi}_{j_n} - \hat{\psi}_n \to 0 \qquad\mbox{in $L^2$ norm.}
\end{equation}\
The momentum representation of the unperturbed eigenfunctions 
$\Phi_{n,m}$ is
\begin{equation}
\hat{\Phi}_{n,m}(p_x,p_y) = \frac{2\pi n m \sqrt{ab}}{(4p_x^2 a^2 - n^2\pi^2)
(4p_y^2b^2-m^2\pi^2)}\left((-1)^n\rme^{-2\rmi a p_x}-1\right)\left(
(-1)^m\rme^{-2\rmi b p_y}-1\right).
\end{equation}
%
% \begin{align}
%   \hat{\Phi}_{n,m}(p_x,p_y) &= \frac2\pi \int_0^{\alpha^{1/4}} \!\!\!
% \int_0^{\alpha^{-1/4}} \cos\left( \frac{n\pi x}{\alpha^{1/4}}\right)
% \cos(m\pi y\alpha^{1/4})\cos(xp_x)\cos(yp_y)\rmd x \rmd y \nonumber \\
% &=\frac{2(-1)^{n+m}}\pi \left(
% \frac{p_x}{(\pi n\alpha^{-1/4})^2-p_x^2}\sin(p_x \alpha^{1/4})\right)
% \left( \frac{p_y}{(\pi m\alpha^{1/4})^2-p_y^2}\sin(p_y \alpha^{-1/4})\right).
% \label{eq:ventotto}
% \end{align}
Since 
\begin{equation}
  \frac{n\pi}{4p_x^2 a^2 - n^2\pi^2} = \frac12\left(
\frac1{2 p_x a - n\pi} - \frac1{2 p_x a + n \pi} \right),
\end{equation}
we re-scale and write
\begin{equation}
  \label{eq:ventinove}
    \hat{\Phi}_{n,m}(np_x,mp_y)= \frac{\pi\sqrt{ab}}{2nm}
\left(\delta_n(2 p_x a +\pi) - \delta_n(2p_xa -\pi)\right) \left(\delta_m(2p_y b
-\pi) -\delta_m(2 p_y b +\pi)\right),
 \end{equation}
where $\delta_n$ is the smoothed-delta function
\begin{equation}
  \label{eq:trenta}
  \delta_n(t)\coloneq\frac{1-\rme^{-\rmi n t}}{\pi \rmi t}.
\end{equation}
The function $\delta_n(t)$ converges weakly to $\delta(t)$ as $n\to\infty$.
Furthermore, it satisfies
\begin{equation}
  \label{eq:trentuno}
  |\delta_n(t)|^2 \sim \frac{2n}\pi \delta(t) \qquad\mbox{as $n\to\infty$.}
\end{equation}
Hence
\begin{equation}
  \label{eq:trentadue}
  nm| \hat{\Phi}_{n,m}(np_x,mp_y)|^2 \sim
{ab}\left(\delta(2 p_x a - \pi) + \delta(2 p_x a + \pi)\right)
\left(\delta(2 p_y b 
- \pi) +\delta(2 p_y b +\pi) \right)
\end{equation}
as $n,m\to\infty$. The momentum eigenfunction localises around the
4 points 
\begin{equation}
  \label{eq:trentaquattro}
  (p_x,p_y) = \left( \pm \frac{n\pi}{2a}, \pm \frac{m\pi}{2b}  \right),
\end{equation}
which satisfy $p_x^2+p_y^2 = E_{n,m}$.
Since $\psi_n$ is a 
superposition of $\Phi_{j_n}$ and $\Phi_{j_n+1}$,
the states in the subsequence $\hat\phi_{j_n}$ become localised around 8 
points,
which all lie on the circle with radius $\sqrt{E_j}$, very much in contrast to
the expected equi-distribution \eqref{eq:ventisei} for ergodic systems.
Numerical simulations illustrating this behaviour have been presented
in \cite{ber:iws}.
This localisation is, in some sense, analogous to the scarring phenomenon
which occurs in some chaotic systems. Since these states are not associated 
with
an unstable periodic orbit, they do not fall into the very precise definition
of a scar given in \cite{kap:sqc}. Rather they are localising around ghosts of 
departed tori of the unperturbed integrable system. Nevertheless
they cannot be explained simply by using torus quantisation, and so they 
provide a further example of the already rich behaviours in systems with
intermediate statistics.

\subsection*{Acknowledgements}
We are grateful to Gregory Berkolaiko, Jens Bolte, Yves Colin de Verdi\`ere
and Tom Spencer for interesting conversations about this work. We thank an
anonymous referee for suggesting an improvement to an earlier version
of corollary \ref{cor:arbitrary:small}.

This work has been supported by an EPSRC Senior Research Fellowship (JPK),
a Royal Society Wolfson Research Merit Award (JM) and
the National Sciences Foundation under research grant DMS-0604859 (BW).

The writing-up of the manuscript was completed during a visit of two
of the authors (JM \& BW) to the Max-Planck-Institut f\"ur Mathematik,
Bonn.

%% {\sf Rectangular \v{S}eba billiards with sides $2a=2\pi$ and
%% $2b=2\pi/\sqrt{\alpha}$. Put scatter at $p=(a,b)$, centrally
%% located.

%% Un-centrally located scatterer causes problems at the moment.}

\appendix
\section{Non-constant unperturbed eigenfunctions at the position of
the scatterer} \label{app:A}
In order to consider what can happen when the 
value of the unperturbed eigenfunctions at the position of the scatterer
can vary, let us consider the rectangular billiard $\Omega$, with sides of
length $a$ and $b$, and Dirichlet boundary conditions.

The energy levels are given by
\begin{equation}
E=E_{n,m} = \pi^2\left(\frac{n^2}{a^2}+\frac{m^2}{b^2}\right),
\end{equation}
for $n,m\geq 1$ integers. 
\begin{lemma}
\begin{equation}
  \frac1{n^2m^2}\geq \frac{4\pi^4}{a^2 b^2}\frac1{E^2}.
\end{equation}
\end{lemma}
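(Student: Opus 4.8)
The plan is to clear denominators and recognise the claim as a disguised arithmetic--geometric-mean inequality. First I would write the energy as a single fraction,
\begin{equation}
E = \pi^2\left(\frac{n^2}{a^2}+\frac{m^2}{b^2}\right) = \frac{\pi^2\,(b^2 n^2 + a^2 m^2)}{a^2 b^2},
\end{equation}
whence
\begin{equation}
E^2 = \frac{\pi^4\,(b^2 n^2 + a^2 m^2)^2}{a^4 b^4}.
\end{equation}
Inserting this into the right-hand side of the asserted bound, the factor $\pi^4$ cancels and one finds
\begin{equation}
\frac{4\pi^4}{a^2 b^2}\,\frac1{E^2} = \frac{4\,a^2 b^2}{(b^2 n^2 + a^2 m^2)^2}.
\end{equation}
Since $n,m,a,b>0$, every quantity in sight is positive, so I may cross-multiply without worrying about signs, and the lemma becomes equivalent to
\begin{equation}
(b^2 n^2 + a^2 m^2)^2 \geq 4\,a^2 b^2 n^2 m^2.
\end{equation}

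The remaining step is to identify this with the elementary inequality $(X+Y)^2 \geq 4XY$, valid for all real $X,Y$ because $(X+Y)^2 - 4XY = (X-Y)^2 \geq 0$. Taking $X = b^2 n^2$ and $Y = a^2 m^2$ yields exactly the displayed estimate, with equality precisely when $b^2 n^2 = a^2 m^2$. I do not expect any genuine obstacle here: the entire content is the observation that, after clearing the common factor $\pi^4/(a^2 b^2)$, the inequality collapses to a perfect square, so no case distinction, asymptotic estimation, or appeal to Weyl's law \eqref{eq:weyl} is required. The only mild point to record explicitly is the positivity of all factors, which licenses the cross-multiplication used to pass between the two equivalent forms of the statement.
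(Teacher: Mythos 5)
Your proof is correct and is essentially the same as the paper's: both reduce the claim to the identity $(X+Y)^2-4XY=(X-Y)^2\geq 0$, the paper applying it directly as $0\leq\pi^4\bigl(\tfrac{n^2}{a^2}-\tfrac{m^2}{b^2}\bigr)^2=E^2-\tfrac{4\pi^4 n^2m^2}{a^2b^2}$ while you clear denominators first. The difference is purely cosmetic.
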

\dimostrazione
We have 
\begin{align} \nonumber
  0\leq \pi^4 \left( \frac{n^2}{a^2}-\frac{m^2}{b^2}\right)^2 &=
\frac{n^4\pi^4}{a^4}-2\frac{n^2m^2}{a^2 b^2}\pi^4+ \frac{m^4 \pi^4}{b^4}\\
&=E^2-4\frac{n^2m^2}{a^2 b^2}\pi^4,
\end{align}
and then re-arrange to get the required estimate.\finire

The eigenfunctions themselves are proportional to
\begin{equation}
 \sin\left(\frac{n\pi x}a\right) \sin\left(\frac{m\pi y}b\right).
\end{equation}
Let us choose the point $p=(x_p,y_p)\in\Omega$ so that $x_p/a$ and $y_p/b$ are
badly-approximable, in the sense that
\begin{equation}
\left| n\frac{x_p}a - r\right| \geq\frac{C}n\qquad\forall n,r\in \Z,
\end{equation}
(this is the best we can hope to do if we want to bound the eigenfunctions
away from $0$). Then
\begin{equation}
n\frac{x_p}a = r + \vartheta(n)
\end{equation}
where $\vartheta$ can depend on $x_p$ and $r$ and satisfies 
\begin{equation}
|\vartheta(n)|\gg \frac1n
\end{equation}
uniformly. Furthermore this bound is achieved if $r/n$ is a continued
fraction approximant to $x_p/a$. We get
\begin{equation}
\sin^2\left(\frac{n\pi x_p}a\right)\gg\frac1{n^2}.
\end{equation}
With a similar bound for the contribution of the $y$-coordinate,
we find that the best bound we can obtain is
\begin{equation}
|\Phi_{n,m}(p)|^2\gg \frac1{n^2m^2} \gg \frac1{E^2}
\end{equation}
and this bound is sharp.

%{\sf Do we gain anything if we know two energy levels are supposed to
%be close together?}
%
\section{Assumption \ref{ass:main} for the event times of a Poisson process}
\label{app:B}
The purpose of this appendix is to prove the following result. Let
$0<q<1/2$ and $1<\rho{} <2(1-q)$ be fixed throughout.
\begin{proposition} \label{prop:B:eins}
  Let $P=(E_j)_{j=1}^\infty$ be the sequence of event times for a Poisson 
process with parameter 1. There is, almost surely, a sequence 
$(\epsilon_n)_{n=1}^\infty$, $\epsilon_n\downarrow 0$ such that for each
$n$ there are four consecutive members of $P$, $\Ea<\Eb<\Ec<\Ed<
\epsilon_n^{-\rho}$, satisfying   
\begin{align} \nonumber
  \Ec-\Eb &< \epsilon_n \\
  \Ed-\Ec &> \epsilon_n^q \\
  \Eb-\Ea &> \epsilon_n^q. \nonumber
\end{align}
\end{proposition}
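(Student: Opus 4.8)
The plan is to pass to the gap sequence of the Poisson process and to exhibit, inside each of a growing family of windows, a block of three consecutive gaps of the form \emph{large, very small, large} --- precisely the pattern encoded by the three inequalities. First I would fix a deterministic sequence $\epsilon_n = 1/n$ (any sequence tending to $0$ slowly enough will serve) and reduce the statement to a Borel--Cantelli estimate: writing $B_n$ for the event that some quadruple of consecutive points with $\Ec-\Eb<\epsilon_n$, $\Eb-\Ea>\epsilon_n^q$ and $\Ed-\Ec>\epsilon_n^q$ occurs with $\Ed<\epsilon_n^{-\rho}$, it suffices to show $\sum_n \Prob(B_n^{\mathrm c})<\infty$. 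Then almost surely $B_n$ holds for all large $n$, and that tail of $(\epsilon_n)$ is the sequence asserted by the proposition.

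The key structural fact is that the gaps $X_k = E_{k+1}-E_k$ of a rate-$1$ Poisson process are i.i.d.\ $\mathrm{Exp}(1)$, so any family of gap-triples built from disjoint index blocks is independent. For one block I would compute
\begin{equation}
p_n = \Prob\bigl(X_k>\epsilon_n^q,\ X_{k+1}<\epsilon_n,\ X_{k+2}>\epsilon_n^q\bigr)
= \rme^{-2\epsilon_n^q}\bigl(1-\rme^{-\epsilon_n}\bigr)\sim\epsilon_n,
\end{equation}
using only $q>0$ so that $\rme^{-2\epsilon_n^q}\to1$ (note also $q<1$ forces $\epsilon_n<\epsilon_n^q$, so the pattern is consistent for small $\epsilon_n$). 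A successful block furnishes four consecutive points $E_i<E_{i+1}<E_{i+2}<E_{i+3}$ realising the required pattern, and one sets $\Ea=E_i,\dots,\Ed=E_{i+3}$. Taking the $M_n=\lfloor c\,\epsilon_n^{-\rho}\rfloor$ disjoint blocks among the first gaps and using independence,
\begin{equation}
\Prob(\text{no block succeeds})\le (1-p_n)^{M_n}\le \exp(-p_nM_n)\le\exp\bigl(-c'\epsilon_n^{1-\rho}\bigr),
\end{equation}
and it is exactly the hypothesis $\rho>1$ that makes the exponent $\epsilon_n^{1-\rho}$ diverge.

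It remains to guarantee that a successful block lies below height $\epsilon_n^{-\rho}$. The top examined point $E_{3M_n+1}$ is a sum of $3M_n$ i.i.d.\ exponentials with mean $\asymp\epsilon_n^{-\rho}$; choosing $c$ small and applying a Cram\'er/Chernoff bound for sums of exponentials gives $\Prob(E_{3M_n+1}>\epsilon_n^{-\rho})\le\exp(-c''\epsilon_n^{-\rho})$. Combining,
\begin{equation}
\Prob(B_n^{\mathrm c})\le \exp\bigl(-c'\epsilon_n^{1-\rho}\bigr)+\exp\bigl(-c''\epsilon_n^{-\rho}\bigr),
\end{equation}
and with $\epsilon_n=1/n$ both terms decay like $\exp(-\mathrm{poly}(n))$ and are summable, so Borel--Cantelli completes the proof.

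I expect the only genuinely delicate point to be this last decoupling of \emph{how many independent trials are run} from \emph{where on the line those trials sit}: fixing a deterministic number $M_n$ of disjoint blocks and separately controlling the random location $E_{3M_n+1}$ by concentration is what prevents the "many trials" and "bounded height" requirements from interfering. Everything else is an elementary estimate. I would also note that neither $q<1/2$ nor the upper bound $\rho<2(1-q)$ is used here; those constraints enter only in the analytic arguments of Section~\ref{sec:vier}.
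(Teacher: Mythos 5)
Your argument is correct and follows essentially the same route as the paper's: independent gap-triples $(\text{large},\text{small},\text{large})$ built on disjoint index blocks with success probability $\sim\epsilon_n$, roughly $\epsilon_n^{-\rho}$ such trials so that $\rho>1$ makes failure overwhelmingly unlikely, separate control of the height of the last examined point, and a Borel--Cantelli step with $\epsilon_n=1/n$. The only cosmetic difference is that the paper runs $N\sim\epsilon^{-\rho'}$ blocks with $\rho'<\rho$ and bounds the Gamma tail of $E_N$ directly via Stirling, whereas you run $M_n\asymp c\,\epsilon_n^{-\rho}$ blocks with $c$ small and invoke a Chernoff bound; both are routine and the proofs are otherwise identical.
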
    
Thus, assumption \ref{ass:main} is almost surely satisfied for a Poisson
process.

As a model for a Poisson process, we will let $\xi_1,\xi_2,\ldots$ be
a sequence of independent exponentially distributed random variables
with parameter 1. Then, defining
\begin{align} \nonumber
  E_1 &= \xi_1 \\
  E_2 &= \xi_1 + \xi_2 \\
   &\vdots \nonumber
\end{align}
The sequence $P=(E_j)_{j=1}^\infty$ so-formed is a Poisson process.%[reference?]
\begin{proposition} \label{prop:B:zwei}
  Let $\epsilon>0$. The probability that there are four consecutive members 
of $P$, $\Ea<\Eb<\Ec<\Ed<\epsilon^{-\rho}$ satisfying
\begin{align}
  \Ec-\Eb &< \epsilon \nonumber \\
  \Ed-\Ec &> \epsilon^q \\
  \Eb-\Ea &> \epsilon^q, \nonumber
\end{align}
is $1-\Ord(\epsilon^{\infty}).$
\end{proposition}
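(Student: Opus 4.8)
The plan is to exploit the fact that for a rate-one Poisson process the successive gaps $\xi_j \coloneq E_j - E_{j-1}$ (with $E_0 \coloneq 0$) are independent and identically $\mathrm{Exp}(1)$-distributed, so that the sought configuration of four consecutive points is exactly a run of three consecutive gaps with the pattern (large, small, large): writing $\Ea = E_{k}$, $\Eb = E_{k+1}$, $\Ec = E_{k+2}$, $\Ed = E_{k+3}$, the three conditions become $\xi_{k+1} > \epsilon^q$, $\xi_{k+2} < \epsilon$, $\xi_{k+3} > \epsilon^q$. The strategy is to manufacture $\asymp \epsilon^{-\rho}$ \emph{independent} chances for such a pattern inside the window $[0,\epsilon^{-\rho}]$ and show that the probability that every one of them fails is smaller than any power of $\epsilon$.

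First I would record the probability that a single triple of gaps realises the pattern. Since the gaps are $\mathrm{Exp}(1)$,
\begin{equation}
p \coloneq \Prob(\xi > \epsilon^q)^2\,\Prob(\xi < \epsilon) = \rme^{-2\epsilon^q}\bigl(1 - \rme^{-\epsilon}\bigr).
\end{equation}
As $\epsilon \to 0$ one has $\rme^{-2\epsilon^q} \to 1$ (because $q > 0$) and $1 - \rme^{-\epsilon} \sim \epsilon$, so $p \sim \epsilon$; in particular $p \geq \tfrac12\epsilon$ for all small $\epsilon$. To obtain genuine independence I would partition the gap sequence into disjoint blocks of three and consider the $N \coloneq \lfloor c_0 \epsilon^{-\rho}\rfloor$ triples $(\xi_{3k+1}, \xi_{3k+2}, \xi_{3k+3})$, $k = 0, \dots, N-1$, for a fixed constant $c_0 < \tfrac13$. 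These triples involve disjoint sets of gaps and are therefore independent, each succeeding with probability $p$, so the event $A$ that at least one triple exhibits the pattern satisfies
\begin{equation}
\Prob(A^{\rm c}) = (1-p)^N \leq \rme^{-pN} \leq \exp\!\bigl(-c'\,\epsilon^{1-\rho}\bigr)
\end{equation}
for some $c'>0$. Because $\rho > 1$ the exponent $\epsilon^{1-\rho} = \epsilon^{-(\rho-1)}$ tends to $+\infty$, so this decays faster than any power of $\epsilon$, i.e. $\Prob(A^{\rm c}) = \Ord(\epsilon^\infty)$.

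The remaining point — and the only genuinely quantitative step — is to guarantee that the successful triple actually lies inside the window. A good triple among the first $N$ blocks produces four points $E_{3k},\dots,E_{3k+3}$ with $E_{3k+3} \leq E_{3N}$, so it suffices to control the event $B \coloneq \{E_{3N} < \epsilon^{-\rho}\}$. Now $E_{3N}$ is a sum of $3N$ independent $\mathrm{Exp}(1)$ variables, hence $\mathrm{Gamma}(3N,1)$ with mean $3N \approx 3c_0\,\epsilon^{-\rho} < \epsilon^{-\rho}$ by the choice $c_0 < \tfrac13$; the threshold $\epsilon^{-\rho}$ is thus a fixed multiple strictly greater than one of the mean. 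A standard Chernoff bound for sums of exponentials then gives $\Prob(B^{\rm c}) = \Prob(E_{3N} \geq \epsilon^{-\rho}) \leq \rme^{-c\,\epsilon^{-\rho}}$ for some $c > 0$, again $\Ord(\epsilon^\infty)$. Combining through the union bound $\Prob((A\cap B)^{\rm c}) \leq \Prob(A^{\rm c}) + \Prob(B^{\rm c})$ yields the claimed probability $1 - \Ord(\epsilon^\infty)$.

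I expect the main obstacle to be purely a matter of bookkeeping with the constant $c_0$: it must be small enough that the window-containment event $B$ has negligible failure probability, yet still leave $\asymp \epsilon^{-\rho}$ independent triples so that the stretched-exponential decay of $\Prob(A^{\rm c})$ survives. It is worth noting that only the hypothesis $\rho > 1$ enters this probabilistic estimate; the upper bound $\rho < 2(1-q)$ from assumption~\ref{ass:main} plays no role here and is needed only for the approximation argument in the proof of theorem~\ref{thm:seba}.
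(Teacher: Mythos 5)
Your proof is correct and follows essentially the same route as the paper: partition the i.i.d.\ exponential gaps into disjoint triples to get independent trials for the (large, small, large) pattern, show the failure probability of all trials is stretched-exponentially small using $\rho>1$, and separately bound the tail of the Gamma-distributed endpoint to keep the configuration inside $[0,\epsilon^{-\rho}]$. The only (immaterial) difference is that the paper takes $N\sim\epsilon^{-\rho'}$ for an auxiliary $1<\rho'<\rho$, which makes the endpoint estimate a cruder large-deviation statement, whereas you take $N\asymp c_0\epsilon^{-\rho}$ and invoke a standard Chernoff bound at a fixed multiple of the mean; both yield $\Ord(\epsilon^\infty)$.
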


\noindent 
The notation $\Ord(\epsilon^\infty)$ refers to a quantity which goes to
zero faster than any power of $\epsilon$. One can say that
the event described in proposition \ref{prop:B:zwei} occurs with
overwhelming probability.

Let us fix $1<\rho'<\rho$, and chose $N=3M\sim\epsilon^{-\rho'}$ where
$M\in\N$.
Let us define the events ${\mathcal S}_j$, $j=0,\ldots,M-1$, by
\begin{equation}
{\mathcal S}_j=\{ \xi_{3j+1}>\epsilon^q,\; \xi_{3j+2}<\epsilon,\; \xi_{3j+3}
>\epsilon^q\}.
\end{equation}
\begin{lemma} \label{lem:B:eins}
  The events ${\mathcal S}_j$, $j=0,\ldots, M-1$ are independent, and the
probability that at least one of them occurs is $1-\Ord(\epsilon^{\infty})$.
\end{lemma}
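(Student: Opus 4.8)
The plan is to prove the independence first and then estimate the probability that none of the events occurs. For independence, I would observe that each $\mathcal{S}_j$ is determined solely by the triple $(\xi_{3j+1},\xi_{3j+2},\xi_{3j+3})$, and that for distinct $j\in\{0,\ldots,M-1\}$ these triples involve pairwise disjoint blocks of indices. Since the $\xi_i$ are mutually independent, any events depending on disjoint sub-collections of them are independent; this settles the first assertion with no further work.

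For the probability estimate, I would use that each $\xi_i$ is exponential of parameter $1$, so that $\Prob(\xi_i>a)=\rme^{-a}$ and $\Prob(\xi_i<a)=1-\rme^{-a}$. Multiplying the three independent factors defining $\mathcal{S}_j$ gives a common value
\begin{equation}
p_0\coloneq\Prob(\mathcal{S}_j)=\rme^{-2\epsilon^q}\bigl(1-\rme^{-\epsilon}\bigr),
\end{equation}
independent of $j$. As $\epsilon\to 0$ one has $\rme^{-2\epsilon^q}\to 1$ and $1-\rme^{-\epsilon}\sim\epsilon$, so $p_0\sim\epsilon$ and in particular $p_0\gg\epsilon$.

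By independence, the probability that none of the $\mathcal{S}_j$ occurs is exactly $(1-p_0)^M$. I would bound this using $1-p_0\leq\rme^{-p_0}$, giving $(1-p_0)^M\leq\rme^{-p_0 M}$. Since $M\sim\tfrac13\epsilon^{-\rho'}$ and $p_0\gg\epsilon$, the exponent satisfies $p_0 M\gg\epsilon^{1-\rho'}=\epsilon^{-(\rho'-1)}$, which diverges like a positive power of $1/\epsilon$ precisely because $\rho'>1$. Hence
\begin{equation}
(1-p_0)^M\leq\rme^{-c\,\epsilon^{-(\rho'-1)}}
\end{equation}
for some constant $c>0$, and an exponential of a negative power of $\epsilon$ tends to $0$ faster than every power of $\epsilon$; thus the failure probability is $\Ord(\epsilon^\infty)$ and the probability that at least one $\mathcal{S}_j$ occurs is $1-\Ord(\epsilon^\infty)$.

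The computations are all elementary; the single point that carries the result is the hypothesis $\rho'>1$, which forces the expected number of successes $p_0 M$ to grow as a positive power of $1/\epsilon$ rather than remaining bounded. This is what upgrades the routine bound $(1-p_0)^M\leq\rme^{-p_0 M}$ into super-polynomial decay, and it is the only place where the specific scaling $N=3M\sim\epsilon^{-\rho'}$ (as opposed to, say, $N$ of order $\epsilon^{-1}$) is essential.
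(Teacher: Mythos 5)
Your proposal is correct and follows essentially the same route as the paper: independence via disjoint blocks of the $\xi_i$, the computation $\Prob(\mathcal{S}_j)=(1-\rme^{-\epsilon})\rme^{-2\epsilon^q}\sim\epsilon$, and the bound $(1-p_0)^M\leq\rme^{-cp_0M}$ with $p_0M\gg\epsilon^{-(\rho'-1)}$ giving super-polynomial decay. The only cosmetic difference is that you invoke $1-x\leq\rme^{-x}$ directly where the paper expands $\log(1-p_1)$ asymptotically; the substance is identical.
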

\proof
The independence of the events ${\mathcal S_j}$ clearly follows because they
are defined on independent random variables. We first calculate the probability
of one of them.
By independence of $\xi_1,\xi_2,\xi_3$,
\begin{align} \nonumber
  \Prob({\mathcal S}_0)&=\Prob(\xi_2<\epsilon)\Prob(\xi_1>\epsilon^q)
\Prob(\xi_3>\epsilon^q) \\
&=\left( \int_0^\epsilon \rme^{-x}\,\rmd x\right)\left(\int_{\epsilon^q}^\infty
\rme^{-x}\,\rmd x\right)^2 \nonumber \\
&=(1-\rme^{-\epsilon})\left(\rme^{-\epsilon^q}\right)^2 \nonumber \\
&=\epsilon + \Ord(\epsilon^{1+q}).
\end{align}
Then, by independence of the ${\mathcal S}_j$s,
\begin{align} \nonumber
  p_1\coloneq\Prob(\mbox{at least one ${\mathcal S}_j$ occurs})&=1-\left(1-
\Prob({\mathcal S}_0)\right)^M\\
&=1-\left(1-\epsilon+\Ord(\epsilon^{1+q})\right)^M.
\end{align}
So, we have
\begin{equation}
\log(1-p_1)=-M\epsilon + \Ord(M\epsilon^{1+q}) 
\sim -\frac13\epsilon^{1-\rho'}.
\end{equation}
For $\epsilon$ sufficiently small, this yields
\begin{equation}
1-p_1\leq \exp\left(-\textstyle \frac16\epsilon^{-(\rho'-1)}\right)=
\Ord(\epsilon^\infty).
\end{equation}
\finire

The probability that the upper bound of $\epsilon^{-\rho}$ is met is given
in the following lemma
\begin{lemma} \label{lem:B:zwei}
  The probability that $E_N < \epsilon^{-\rho}$ is $1-\Ord(\epsilon^\infty)$.
\end{lemma}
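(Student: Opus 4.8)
The plan is to recognise $E_N=\xi_1+\cdots+\xi_N$ as a sum of $N$ independent $\mathrm{Exp}(1)$ variables, so that $\E[E_N]=N\asymp\epsilon^{-\rho'}$, and to observe that the threshold $\epsilon^{-\rho}$ sits far above the mean because $\rho>\rho'$: indeed $\epsilon^{-\rho}/N\asymp\epsilon^{-(\rho-\rho')}\to\infty$ as $\epsilon\to0$. Thus $\{E_N\geq\epsilon^{-\rho}\}$ is an upper-tail large-deviation event, and I would bound its probability by an exponential Chernoff estimate, which will turn out to be $\Ord(\epsilon^\infty)$. Taking complements then gives the lemma.

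Concretely, I would use the moment generating function of a single exponential, $\E[\rme^{t\xi}]=(1-t)^{-1}$ for $0\leq t<1$, so that by independence $\E[\rme^{tE_N}]=(1-t)^{-N}$. The exponential Markov inequality then yields, for any $t\in(0,1)$,
\begin{equation}
\Prob(E_N\geq\epsilon^{-\rho})\leq \rme^{-t\epsilon^{-\rho}}(1-t)^{-N}.
\end{equation}
The cleanest route is to fix $t=\tfrac12$, giving $\Prob(E_N\geq\epsilon^{-\rho})\leq 2^N\rme^{-\epsilon^{-\rho}/2}$, whose exponent is
\begin{equation}
N\log2-\tfrac12\epsilon^{-\rho}\asymp \epsilon^{-\rho'}\log2-\tfrac12\epsilon^{-\rho}.
\end{equation}
If a sharper constant were wanted, one could instead optimise over $t$, taking $t=1-N\epsilon^{\rho}$ (which lies in $(0,1)$ since $N\epsilon^\rho\asymp\epsilon^{\rho-\rho'}\to0$); this replaces the exponent by $-\epsilon^{-\rho}+N+N\log(\epsilon^{-\rho}/N)$, but the fixed choice already suffices.

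Since $\rho>\rho'$, the negative term $-\tfrac12\epsilon^{-\rho}$ dominates the positive term of order $\epsilon^{-\rho'}$, so the exponent is $\sim-\tfrac12\epsilon^{-\rho}\to-\infty$ and
\begin{equation}
\Prob(E_N\geq\epsilon^{-\rho})\leq\exp\!\left(-\tfrac13\epsilon^{-\rho}\right)=\Ord(\epsilon^\infty)
\end{equation}
for $\epsilon$ small enough. Hence $\Prob(E_N<\epsilon^{-\rho})=1-\Ord(\epsilon^\infty)$, as claimed.

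The only point requiring care is the bookkeeping of the two competing powers of $\epsilon$ in the exponent: one must invoke $\rho>\rho'$ to ensure that $\epsilon^{-\rho}$ genuinely beats $N\asymp\epsilon^{-\rho'}$ (and, in the optimised version, also beats the slowly varying factor $N\log(\epsilon^{-\rho}/N)$). I expect no serious obstacle here; this is a routine Chernoff bound, and it is precisely this same inequality $\rho>\rho'$ that was built into the choice $N\sim\epsilon^{-\rho'}$ of the number of blocks in the first place, which is what makes the tail estimate work.
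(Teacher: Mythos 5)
Your proof is correct, and it takes a genuinely different route from the paper's. The paper works directly with the exact $\Gamma(N,1)$ density of $E_N$: it writes $1-p_2$ as the incomplete-Gamma tail $\Gamma(N)^{-1}\int_{N^{1+\alpha}}^\infty x^{N-1}\rme^{-x}\,\rmd x$, evaluates it by shifting the variable and expanding the binomial, bounds the resulting sum, and then invokes Stirling's formula; the threshold is parametrised as $N^{1+\alpha}$ and only at the end is $\alpha=\rho/\rho'-1>0$ chosen so that $N^{1+\alpha}\sim\epsilon^{-\rho}$. You instead apply the exponential Chernoff bound $\Prob(E_N\geq\epsilon^{-\rho})\leq\rme^{-t\epsilon^{-\rho}}(1-t)^{-N}$ with the fixed choice $t=\tfrac12$, reducing everything to the single comparison $N\log 2\asymp\epsilon^{-\rho'}\ll\epsilon^{-\rho}$. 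Both arguments ultimately rest on the same inequality $\rho>\rho'$ built into the choice $N\sim\epsilon^{-\rho'}$, and both deliver a bound of the form $\exp(-c\,\epsilon^{-\rho})=\Ord(\epsilon^\infty)$, which is far stronger than needed. Your route is shorter and avoids Stirling's formula and the binomial manipulation entirely; the paper's computation is more explicit about the actual size of the tail, but for the purposes of Proposition \ref{prop:B:zwei} the two are interchangeable.
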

\proof
Let $\alpha>0$. The probability density for $E_N$ is $\Gamma(N)^{-1}x^{N-1}
\rme^{-x}$.
So
\begin{align} \nonumber
p_2\coloneq
  \Prob(E_N < N^{1+\alpha}) &= 1-\frac1{\Gamma(N)}\int_{N^{1+\alpha}}^\infty
 x^{N-1}\rme^{-x}\,\rmd x \\
&=1-\frac{\exp(-N^{1+\alpha})}{\Gamma(N)}\int_0^\infty (x+N^{1+\alpha})^{N-1}
\rme^{-x}\,\rmd x  \nonumber \\
&=1-\exp(-N^{1+\alpha}) N^{(1+\alpha)(N-1)}\sum_{j=0}^{N-1}\frac1{\Gamma(N-j)
N^{(1+\alpha)j}},
\end{align}
expanding the binomial. Using $\displaystyle \frac1{\Gamma(N-j)}\leq
\frac{N^j}{\Gamma(N)}$ we can estimate
\begin{equation}
\sum_{j=0}^{N-1}\frac1{\Gamma(N-j)
N^{(1+\alpha)j}} \leq \frac1{\Gamma(N)}\sum_{j=0}^{N-1}\frac1{N^{\alpha j}}
\ll \frac1{\Gamma(N)},
\end{equation}
where the implied constant could depend on $\alpha$.
This leads to
\begin{align} \nonumber
  1-p_2 &\ll \frac{\exp(-N^{1+\alpha})N^{(1+\alpha)(N-1)}}{\Gamma(N)}\\
 &\sim \frac{\exp(-N^{1+\alpha}+N)N^{\alpha(N-1)}}{\sqrt{2\pi(N-1)}} \nonumber\\
 &\ll \frac1{N^\infty},
\end{align}
where Stirling's formula has been used.
This last line is $\Ord(\epsilon^\infty)$ since $N^{-1}\sim \epsilon^{\rho'}.$
Finally setting 
\begin{equation}
\alpha=\frac{\rho}{\rho'}-1>0
\end{equation}
 gives the required
estimate.
\finire
\begin{proof}[Proof of proposition \ref{prop:B:zwei}]
  We are interested in the events corresponding to lemmata \ref{lem:B:eins}
and \ref{lem:B:zwei} happening simultaneously. By the inclusion-exclusion
principle, the probability that this happens is at least $p_1+p_2-1 =
1-\Ord(\epsilon^{\infty})$.
\end{proof}
\begin{proof}
  [Proof of proposition \ref{prop:B:eins}]
Let ${\mathcal E}_n$, $n\in\N$ 
be the event that there are found four consecutive
members of $P$,  $\Ea<\Eb<\Ec<\Ed<n^{\rho}$ satisfying
\begin{align}
  \Ec-\Eb &< \frac1n \nonumber \\
  \Ed-\Ec &> \frac1{n^q} \\
  \Eb-\Ea &> \frac1{n^q}. \nonumber
\end{align}
By proposition \ref{prop:B:zwei}, 
$\Prob({\mathcal E}_n^{\rm c})\ll\frac1{n^2}$. Hence, by the Borel-Cantelli
lemma, the probability that infinitely many ${\mathcal E}_n^{\rm c}$ occur
is zero. Equivalently, only finitely many ${\mathcal E}_n^{\rm c}$ occur,
almost surely. So, almost surely, there is an infinite subsequence of 
$n\in\N$ such that ${\mathcal E}_n$ occurs.
\end{proof}

\def\rmi{{\mathrm i}}\def\Dbar{\leavevmode\lower.6ex\hbox to 0pt{\hskip-.23ex
  \accent"16\hss}D} \def\cprime{$'$}


\begin{thebibliography}{10}

\bibitem{lin:ima}
E.~Lindenstrauss (2006) ``Invariant measures and arithmetic quantum unique
  ergodicity,'' {\em Ann. of Math.} {\bf 163}, pp.~165--219.

\bibitem{sou:que}
K.~Soundararajan ``Quantum unique ergodicity for {${\rm SL}_2({\mathbb Z})
  \backslash{\mathbb H}$}.'' Preprint {\tt ArXiv:0901.4060}. To appear in
{\it Ann.\ of Math.}

\bibitem{has:ebt}
A.~Hassell (2010) ``Ergodic billiards that are not quantum unique ergodic,'' 
{\em Ann.\ of Math.} {\bf 171}, pp.~205--618. With an
  appendix by A.~Hassell and L.~Hillairet. 

\bibitem{bog:swf2}
E.~Bogomolny and C.~Schmit (2004) ``Structure of wave functions of
  pseudointegrable billiards,'' {\em Phys.\ Rev.\ Lett.} {\bf 92}, art.\ no.\
  244102.
\newblock This article appears at {\tt ArXiv:nlin.CD/0402017} with title
  ``Superscars''.

\bibitem{bur:bbm}
N.~Burq and M.~Zworski (2005) ``Bouncing ball modes and quantum chaos,'' {\em
  SIAM Review} {\bf 47}, pp.~43--49.

\bibitem{col:maq}
Y.~Colin~de Verdi\`ere ``Modes and quasi-modes on surfaces: variation on an
  idea of {A}ndrew {H}assell.'' Preprint {\tt ArXiv:0902.2095}.

\bibitem{don:que}
H.~Donnelly (2003) ``Quantum unique ergodicity,'' {\em Proc. Amer. Math. Soc.}
  {\bf 131}, pp.~2945--2951.

\bibitem{hil:coe}
L.~Hillairet (2006) ``Clustering of eigenvalues on translation surfaces,'' {\em
  Ann. Henri Poincar\'e} {\bf 7}, pp.~689--710.

\bibitem{mar:ql}
J.~Marklof (2006) ``Quantum leaks,'' {\em Commun.\ Math.\ Phys.} {\bf 264},
  pp.~303--316.

\bibitem{kro:qme}
R.~d.~L. Kronig and W.~G. Penney (1931) ``Quantum mechanics of electrons in
  crystal lattices,'' {\em Proc.\ R.\ Soc.\ Lond.\ A} {\bf 130}, pp.~499--513.

\bibitem{alb:spd}
S.~Albeverio and P.~Kurasov {\em Singular perturbations of differential
  operators}, vol.~271 of {\em London Mathematical Society Lecture Note Series}.
\newblock Cambridge University Press 2000.

\bibitem{seb:wcs}
P.~{\v{S}}eba (1990) ``Wave chaos in singular quantum billiard,'' {\em Phys.\
  Rev.\ Lett.} {\bf 64}, pp.~1855--1858.

\bibitem{seb:wci}
P.~{\v{S}}eba and K.~{\.{Z}}yczkowski (1991) ``Wave chaos in quantized
  clasically nonchaotic systems,'' {\em Phys.\ Rev.\ E} {\bf 44},
  pp.~3457--3465.

\bibitem{shi:lsd}
T.~Shigehara, N.~Yoshinaga, T.~Cheon, and T.~Mizusaki (1993) ``Level spacing
  distribution of a singular billiard,'' {\em Phys.\ Rev.\ E} {\bf 47},
  pp.~R3822--R3825.

\bibitem{boh:ccq}
O.~Bohigas, M.-J. Giannoni, and C.~Schmit (1984) ``Characterization of chaotic
  quantum spectra and universality of level fluctuation laws,'' {\em Phys.\
  Rev.\ Lett.} {\bf 52}, pp.~1--4.

\bibitem{cas:otc}
G.~Casati, F.~Valz-Griz, and I.~Guarneri (1980) ``On the connection between the
  quantization of nonintegrable systems and statistical theory of spectra,''
  {\em Lett.\ Nuovo Cimento} {\bf 28}, pp.~279--282.

\bibitem{ber:lcr}
M.~V. Berry and M.~Tabor (1977) ``Level clustering in the regular spectrum,''
  {\em Proc.\ R.\ Soc.\ London, Ser.\ A} {\bf 356}, pp.~375--394.

\bibitem{alb:wcq}
S.~Albeverio and P.~{\v{S}}eba (1991) ``Wave chaos in quantum systems with
  point interaction,'' {\em J.\ Stat.\ Phys.} {\bf 64}, pp.~369--383.

\bibitem{ber:tps}
G.~Berkolaiko and J.~P. Keating (1999) ``Two-point spectral correlations for
  star graphs,'' {\em J.\ Phys.\ A} {\bf 32}, pp.~7827--7841.

\bibitem{ber:sgs}
G.~Berkolaiko, E.~B. Bogomolny, and J.~P. Keating (2001) ``Star graphs and
  {{\v{S}}}eba billiards,'' {\em J.\ Phys.\ A} {\bf 34}, pp.~335--350.

\bibitem{bog:sct}
E.~Bogomolny and O.~Giraud (2002) ``Semiclassical calculations of the two-point
  correlation form factor for diffractive systems,'' {\em Nonlinearity} {\bf
  15}, pp.~993--1018.

\bibitem{bog:ss}
E.~Bogomolny, U.~Gerland, and C.~Schmit (2001) ``Singular statistics,'' {\em
  Phys.\ Rev.\ E} {\bf 63}, art.\ no.\ 036206.

\bibitem{bog:nnd}
E.~Bogomolny, O.~Giraud, and C.~Schmit (2002) ``Nearest-neighbor distribution
  for singular billiards,'' {\em Phys.\ Rev.\ E} {\bf 65}, art.\ no.\ 056214.

\bibitem{rah:ssr}
S.~Rahav and S.~Fishman (2002) ``Spectral statistics of rectangular billiards
  with localized perturbations,'' {\em Nonlinearity} {\bf 15}, pp.~1541--1594.

\bibitem{rah:ppc}
S.~Rahav, O.~Richman, and S.~Fishman (2003) ``Point perturbations of circle
  billiards,'' {\em J.\ Phys.\ A} {\bf 36}, pp.~L529--L536.

\bibitem{ber:iws}
G.~Berkolaiko, J.~P. Keating, and B.~Winn (2003) ``Intermediate wavefunction
  statistics,'' {\em Phys.\ Rev.\ Lett} {\bf 91}, art.\ no.\ 134103.

\bibitem{ber:nqe}
G.~Berkolaiko, J.~P. Keating, and B.~Winn (2004) ``No quantum ergodicity for
  star graphs,'' {\em Commun.\ Math.\ Phys.} {\bf 250}, pp.~259--285.

\bibitem{cdv:plI}
Y.~Colin~de Verdi{\`e}re (1982) ``Pseudo-laplaciens. {I},'' {\em Ann. Inst.
  Fourier (Grenoble)} {\bf 32}, pp.~xiii, 275--286.

\bibitem{zor:psa}
J.~Zorbas (1980) ``Perturbation of self-adjoint operators by {D}irac
  distributions,'' {\em J.\ Math.\ Phys.} {\bf 21}, pp.~840--847.

\bibitem{ivr:tst}
V.~J. Ivri{\u\i} (1980) ``The second term of the spectral asymptotics for a
  {L}aplace-{B}eltrami operator on manifolds with boundary,'' {\em Funktsional.
  Anal. i Prilozhen.} {\bf 14}, pp.~25--34.

\bibitem{neu:aeh}
J.~von Neumann (1930) ``Allgemeine {E}igenwerttheorie {H}ermitescher
  {F}unktionaloperatoren,'' {\em Math. Ann.} {\bf 102}, pp.~49--131.

\bibitem{laz:Kts}
V.~F. Lazutkin {\em K{AM} theory and semiclassical approximations to
  eigenfunctions} vol.~24 of {\em Ergebnisse der Mathematik und ihrer
  Grenzgebiete (3) [Results in Mathematics and Related Areas (3)]}.
\newblock Berlin: Springer-Verlag 1993.
\newblock With an addendum by A. I. \v{S}nirel'man.

\bibitem{shi:wcq}
T.~Shigehara and T.~Cheon (1996) ``Wave chaos in quantum billiards with small
  but finite-size scatterer,'' {\em J.\ Phys.\ E} {\bf 54}, pp.~1321--1331.

\bibitem{sar:vai}
P.~Sarnak ``Values at integers of binary quadratic forms,'' in {\em Harmonic
  analysis and number theory (Montreal, PQ, 1996)} vol.~21 of {\em CMS Conf.
  Proc.} pp.~181--203 Providence, RI: Amer. Math. Soc. 1997.

\bibitem{esk:qfs}
A.~Eskin, G.~Margulis, and S.~Mozes (2005) ``Quadratic forms of signature
  {$(2,2)$} and eigenvalue spacings on rectangular 2-tori,'' 
{\em Ann. of Math.} {\bf 161}, pp.~679--725.

\bibitem{mar:sff}
J.~Marklof (1998) ``Spectral form factors of rectangle billiards,'' {\em 
Commun.\ Math.\ Phys.} {\bf 199}, pp.~169--202.

\bibitem{sch:epe}
A.~I. {\v{S}}nirel'man (1974) ``Ergodic properties of eigenfunctions,'' {\em
  Usp.\ Math.\ Nauk.} {\bf 29}, pp.~181--182.

\bibitem{zel:ude}
S.~Zelditch (1987) ``Uniform distribution of the eigenfunctions on compact
  hyperbolic surfaces,'' {\em Duke Math.\ J.} {\bf 55}, pp.~919--941.

\bibitem{cdv:eef}
Y.~Colin~de Verdi{\`e}re (1985) ``Ergodicit\'e et fonctions propres du
  {L}aplacien,'' {\em Commun.\ Math.\ Phys.} {\bf 102}, pp.~497--502.

\bibitem{kap:sqc}
L.~Kaplan (1999) ``Scars in quantum chaotic wavefunctions,'' {\em Nonlinearity}
  {\bf 12}, pp.~R1--R40.

\end{thebibliography}
\end{document}